\newtheorem{remark}{Remark}
\newtheorem{proposition}{Proposition}
\begin{document}

\title{Compressive Sensing-Based Detection with Multimodal Dependent Data}

\author{\authorblockA{Thakshila Wimalajeewa  \emph{Member IEEE}, and Pramod K.
Varshney, \emph{Fellow IEEE}}
}

\maketitle\thispagestyle{empty}

\begin{abstract}
 Detection  with high dimensional   multimodal  data is a challenging problem  when there are complex inter- and intra- modal  dependencies. While  several   approaches have been  proposed for dependent   data fusion (e.g., based on copula theory), their advantages  come at a high price in terms of computational complexity.  In this paper, we treat the  detection problem with compressive   sensing (CS)  where compression at each sensor  is achieved via low dimensional random projections. CS has recently  been  exploited to solve  detection problems under various assumptions on the signals of interest, however, its potential  for  dependent data fusion  has not been     explored adequately.  We exploit the capability of CS to capture statistical  properties of uncompressed data in order to compute decision statistics for detection in the compressed domain.   First, a Gaussian approximation is employed  to perform likelihood ratio (LR)  based  detection with compressed data. In this approach, inter-modal dependence  is captured via a compressed version of the covariance matrix of the concatenated (temporally and spatially) uncompressed data vector. We show that, under certain conditions, this approach   with a small number of compressed measurements per node   leads to enhanced performance compared to detection with  uncompressed  data  using  widely considered  suboptimal approaches.
 Second, we develop a nonparametric approach    where a decision statistic based on the second order statistics  of uncompressed data  is computed in the  compressed domain. The second  approach is  promising over  other related nonparametric approaches and  the first approach   when  multimodal data is  highly  correlated at the expense of  slightly increased  computational complexity.
\end{abstract}
{\bf Keywords}:  Information fusion, multi-modal data, compressive sensing, detection theory, statistical dependence, copula theory

\footnotetext[1]{This work was supported in part by ARO grant no. W911NF-14-1-0339. A part of this work  was  presented at  ICASSP, New Orleans, LA in March  2017 \cite{Wimalajeewa_icassp17}.
The authors are with the Dept. EECS, Syracuse University, Syracuse, NY. Email: \{twwewelw,varshney\}@syr.edu }

\section{Introduction}
Multimodal data represents  multiple aspects of a phenomenon of interest (PoI) observed  using different acquisition methods or different types of sensors \cite{Lahat_Proc2015}. Due to the diversity of information,  multimodal data enhances  inference  performance compared to that with unimodal data.   Multimodal data fusion has attracted much attention in different application scenarios  such as biometric score fusion \cite{Nandakumar_PAMI08,Adall_MM13}, multi-media analysis \cite{Atrey_MS10}, automatic target recognition \cite{Zhang_AES12}, and footstep detection \cite{Jin_Fusion11} to name a few.   To obtain a unified picture of the PoI to perform a given inference task, multimodal data needs to be fused in an efficient manner. This is  a challenging problem  in many applications due to  complex inter- and intra- modal dependencies and high dimensionality of data.

When the goal is to solve a detection problem in a parametric framework,  performing  likelihood ratio (LR)  based fusion is challenging  since the computation  of the  joint likelihood functions is difficult in the presence of many unknown parameters and complex inter- and intra- modal dependencies. Different techniques have been proposed to estimate the probability density functions (pdfs)  such as histograms, and kernel based methods \cite{Nandakumar_PAMI08}. In addition to LR based methods, some feature based techniques for multimodal data fusion are discussed  in  \cite{Lahat_Proc2015}. When the marginal pdfs of data of each modality are  available (or can be estimated), which can be disparate due to the heterogeneous nature of multimodal data,  copula theory has been used to model inter-modal complex dependencies  in \cite{Mercier_2007,iyengar_tsp11,ashok_tsp11,ashok_taes11,Iyengar2011,Subramanian_2011,He_tsp2015}. While there are several copula density functions  developed  in the literature, finding the best copula function that fits a given set of data is  computationally challenging. This is because different copula functions may characterize different types
of dependence behaviors  among  random variables \cite{Nelsen2006,Mari_B1}.  Finding multivariate copula density functions  with more than two modalities is another challenge since most of the existing copula functions are derived considering the bivariate case. Thus, the benefits of the use of copula theory for likelihood ratio based fusion with multimodal dependent  data come at a higher computational  price. One of the commonly used suboptimal methods   is to neglect inter-modal dependence and   compute the likelihood functions based on  the disparate marginal pdfs  of each modality;  we call this 'the product approach' in the rest of the paper.   The product  approach leads  to poor performance when   the first order  statistics of uncompressed data under two hypotheses are not  significantly different from each other and/or the  inter-modal dependence  is strong.

In this paper, we treat  the detection  problem with heterogeneous dependent data  in a compressed domain. In the proposed framework, each node compresses its time samples via low dimensional random projections as proposed in compressive sensing (CS) \cite{candes1,candes2,donoho1,Eldar_B1} and transmits  the compressed observation vector to the  fusion center.  Thus, the communication cost is greatly reduced compared to transmitting all the high dimensional observation vectors  to  the fusion center.
While CS theory  has  mostly been exploited for sparse signal reconstruction,   its potential for solving detection problems  has also been investigated  in several recent works \cite{duarte_ICASSP06,
haupt_ICASSP07,davenport_JSTSP10,Wimalajeewa_asilomar10,
Gang_globalsip14,Bhavya_cscps14,Bhavya_asilomar14, Rao_icassp2012,Cao_Info2014,Kailkhura_WCL16,Kailkhura_TSP16,Wimalajeewa_tsipn16}.
Some of the works, such as \cite{duarte_ICASSP06,haupt_ICASSP07,Gang_globalsip14,Rao_icassp2012,Cao_Info2014,Wimalajeewa_tsipn16} focused  on constructing decision statistics in the compressed domain exploiting the sparsity prior,  some other works \cite{davenport_JSTSP10,Wimalajeewa_asilomar10,Bhavya_cscps14,Bhavya_asilomar14,Kailkhura_WCL16,Kailkhura_TSP16} considered  the detection  problem when the signals  are not necessarily sparse. When the signal to be detected is known and deterministic,    a performance loss is expected in terms of the probabilities  of detection and false alarm when performing likelihood ratio based detection  in the compressed domain compared  to that with uncompressed data \cite{davenport_JSTSP10}. However, when the signal-to-noise ratio (SNR) is sufficiently large, this loss is not significant and the compressed detector is  capable of providing a  similar performance  as  the uncompressed  detector. In \cite{Kailkhura_TSP16}, the  authors have extended  the known signal detection problem with CS to the multiple sensor case considering  Gaussian measurements. While intra-signal (temporal) dependence  was  considered with Gaussian measurements,  inter-sensor (spatial)  dependence  was ignored in \cite{Kailkhura_TSP16}.  As mentioned before, with heterogeneous multimodal data, handling inter-modal dependence is one of the key issues in developing efficient fusion strategies.   To the best of authors' knowledge,    the ability of  CS in capturing the dependence properties of uncompressed data to solve  detection  problems  has  not been well investigated in the literature.

In this paper, our goal is to exploit the potential of CS to capture dependence structures  of high dimensional data focusing on detection problems. We propose a parametric as well as a nonparametric approach  for detection with compressed data. In the first approach, we treat   the detection problem  completely in the compressed domain.   With arbitrary disparate marginal pdfs for  (temporally independent) uncompressed data of each modality, we employ a   Gaussian approximation in  the compressed domain and the joint likelihood function of spatially  dependent (over modalities)  is computed based on  multivariate Gaussian pdfs. With this approach,  dependence is captured via  a compressed version of the covariance matrix of the concatenated (over all the modalities)  uncompressed data vector. We show that,  under certain conditions,  using  a small number of compressive measurements (compared to the  original signal dimension),   better  or similar  performance can be achieved  in the compressed domain compared to performing fusion  (i). using  the product approach with uncompressed data where inter-modal dependence is completely ignored  and (ii).   when widely available copula functions are used to model dependence of  highly dependent  uncompressed data.
We  further discuss  as to  how to decide when   it is beneficial  to perform  compressed detection over suboptimal detection  with uncompressed dependent data  in terms of the Bhattacharya distance measure.

In the second approach, we exploit  the potential of  CS  to capture statistical  information of uncompressed data in the compressed domain to compute a  test   statistic for detection. When uncompressed data is dependent and highly correlated \footnote{Throughout  the paper, by 'dependent and correlated', we mean that the data is dependent and has  a  non-diagonal covariance matrix. When the data is dependent but  uncorrelated, i.e., when the dependent data has a diagonal covariance matrix, we use the term 'dependent and  uncorrelated'.} in the presence of the random phenomenon being observed (alternate hypothesis), the covariance matrix of the concatenated data  vector (over modalities) is likely to  have a different structure compared to  the case where the phenomenon is absent (null hypothesis). Thus, a decision statistic can be computed based on the covariance information.   Estimation of the covariance matrix of  uncompressed data is computationally expensive when  the signal dimension is large.  Compressive covariance sensing has been discussed in \cite{Romero_SPM16} in which the covariance matrix of uncompressed data is estimated using compressed samples.  It is noted that estimation of the complete covariance matrix is not necessary to construct a reliable test statistic for detection. Covariance based test statistics have been proposed  for spectrum sensing in \cite{Zeng_C2007,Zeng_VT09}  without considering any compression. In this paper, depending on the structure of the covariance matrix of uncompressed data, efficient test statistics for detection are computed in the  compressed domain, in contrast to the work in \cite{Zeng_C2007,Zeng_VT09}.  When the difference in second order statistics under two hypotheses is more significant than that with the first order statistics,  this approach provides better performance than the first approach with some extra computational complexity. Further, under the same conditions, this approach outperforms the energy detector with compressed as well as with uncompressed data, which  is the widely  considered nonparametric detector. Moreover, in contrast  to the energy detector, the   proposed approach is robust, with respect to the  threshold setting,  against  the uncertainties of the signal parameters under the null hypothesis.

The paper is organized as follows. In Section \ref{sec_formulation}, background on the  detection  problem  with uncompressed  dependent data is discussed.  LR  based detection with compressed dependent data is considered in Section \ref{sec_likelihood}. We also discuss when it is beneficial to perform LR based detection with compressed data compared to detection using suboptimal techniques with uncompressed data considering numerical examples.  In Section \ref{sec_covariance}, we discuss how to exploit the CS measurement scheme to construct a decision statistic based on the covariance information  of uncompressed data.  In Section \ref{sec_simulation}, CS based detection performance  is investigated  with real experimental data. Section \ref{sec_conclusion} concludes the paper.

\subsection*{Notation}
The following notation and terminology are  used throughout the paper.
Scalars are denoted by lower case letters; e.g., $x$. Lower (upper)  case boldface letters are used to denote vectors (matrices); e.g., $\mathbf x$ ($\mathbf A$).   Matrix transpose is denoted by $\mathbf A^T$. The $n$-th element of the vector $\mathbf x_j$ is denoted by both $\mathbf x_j[n]$ and $x_{nj}$ while the $(m,n)$-th element of the matrix $\mathbf A$ is denoted by $\mathbf A[m,n]$. The $j$-th column vector and the $i$-th row vector of $\mathbf A$ are denoted by $\mathbf a_j$, and  $\mathbf a^i$, respectively.  The  trace operator is denoted by  $\mathrm{tr}(\cdot)$.  The $l_p$ norm of a vector $\mathbf x$ is denoted by $||\mathbf x||_p$ while the Frobenius norm of a matrix $\mathbf A$ is denoted by $||\mathbf A||_F$.  Calligraphic letters are used to denote sets; e.g., $\mathcal U$.     We use the notation $|.|$ to denote the absolute value of a scalar, and determinant of a matrix. We use $\mathbf I_N$ to denote the identity matrix of dimension $N$ (we avoid using subscript when there is no ambiguity). The vectors  of all zeros and ones  with an appropriate dimension are denoted by $\mathbf 0$ and $\mathbf 1$, respectively.
The notation $\mathbf x \sim \mathcal N(\boldsymbol\mu, \boldsymbol\Sigma)$ denotes that the random vector  $\mathbf x$ has a  multivariate Gaussian pdf with mean vector $\boldsymbol\mu$ and covariance matrix $\boldsymbol\Sigma$.

\section{Problem Formulation and Background}\label{sec_formulation}
Let there be $L$ sensor nodes in a network deployed to solve a binary hypothesis testing  problem where the  two hypotheses are denoted by $\mathcal H_1$  and $\mathcal H_0$, respectively. The observation  vector at each node is denoted by $\mathbf x_j\in \mathbb R^N$ for $j=1,\cdots,L$.   The goal is to decide as to  which hypothesis is  true based on $\mathbf x=[\mathbf x_1^T, \cdots,\mathbf x_L^T]^T$.
\subsection{Likelihood Ratio Based Detection}
Consider the detection problem in a parametric framework where the marginal pdf  of $\mathbf x_j$ is available under both  hypotheses. Let   $\mathbf x_j$ be  distributed under $\mathcal H_1$ and $\mathcal H_0$  as
\begin{eqnarray}
\mathcal H_1&:& \mathbf x_j \sim f_1(\mathbf x_j)\nonumber\\
\mathcal H_0&:& \mathbf x_j \sim f_0(\mathbf x_j),  j=1,\cdots,L\label{obs_0}
\end{eqnarray}
respectively, where $f_i(\mathbf x_j)$ denotes  the joint pdf of $\mathbf x_j$ under $\mathcal  H_i$ for $i=0,1$ and $j=1,\cdots,L$.  The optimal test which minimizes the average probability of error  based on \eqref{obs_0} is the LR test \cite{poor1} which is given by
\begin{eqnarray}
\delta = \left\{
\begin{array}{ccc}
1 ~ &~\mathrm{if} & \frac{f_1(\mathbf x)}{f_0(\mathbf x)} > \tau\\
0 ~&  \mathrm{otherwise} ~&
\end{array}\right.\label{LLR_test_uncompressed}
\end{eqnarray}
where  $\tau$ is  the  threshold. To perform the  test in \eqref{LLR_test_uncompressed}, it is required to compute the joint pdfs  $f_1(\mathbf x)$ and $f_0(\mathbf x)$. The
optimality of the LR  test is guaranteed only when the
underlying joint pdfs  are known. When $\mathbf x_1, \cdots, \mathbf x_L$  are independent under $\mathcal H_i$ for $i=0,1$, $f_i(\mathbf x)$ can be written as $
f_i(\mathbf x) = \prod_{j=1}^L f_i(\mathbf x_j)
$
for $i=0,1$. However, this assumption may not be realistic in practical applications.  For example, consider the problem of  detection of the  presence of a common random phenomenon  in a heterogenous signal processing application where there are multiple sensors of different modalities.  The data at different nodes may follow disparate marginal pdfs  due to the differences in the physics that govern each modality. The presence of the  common random phenomenon can change the  statistics of the heterogeneous data  and make the observations at different modalities dependent \cite{iyengar_tsp11}. Thus, to detect the presence of the random phenomenon in the LR framework, computation  of the joint pdf of data collected at the multiple nodes in the presence of inter-modal dependencies is required.

There are several  approaches proposed in the literature to perform LR based detection when the exact pdf of $\mathbf x$ is not available. These techniques are  commonly categorized as parametric, nonparametric, and semi-parametric approaches.
\subsection{Copula Theory}
 In a parametric framework, copulas are used to construct a valid joint distribution describing an  arbitrary and  possibly  nonlinear dependence structure \cite{Nelsen2006,Mercier_2007,iyengar_tsp11,ashok_tsp11,ashok_taes11,Iyengar2011,Subramanian_2011,He_tsp2015}.
According to copula theory,
the pdf of $\mathbf x$ under $\mathcal H_i$ can be written as \cite{Nelsen2006},
\begin{eqnarray*}
f_i(\mathbf x) = \prod_{n=1}^N \prod_{l=1}^L f_i( \mathbf x_l[n]) c_{ni}(u^i_{n1},\cdots,u_{nL}^i)
\end{eqnarray*}
for $i=0,1$ where $c_{ni}(\cdot)$ denotes the copula density function, $u_{nl}^{i}= F(\mathbf x_l[n] | \mathcal H_i)$ with $F(x|\mathcal H_i)$ denoting  the marginal cdf of $x$ under $\mathcal H_i$.
Then, the  log LR  (LLR) can be written in the following  form:
\begin{eqnarray}
\Lambda_{\mathrm{LLR}}(\mathbf x) &=& \log \frac{f_1(\mathbf x)}{f_0(\mathbf x)} = \sum_{l=1}^L \sum_{n=1}^N  \log \frac{f_1(\mathbf x_l[n])}{f_0(\mathbf x_l[n])} \nonumber \\
&+& \sum_{n=1}^N \log \frac{c_{n1} (u_{n1}^1, \cdots, u_{nL}^1 | \phi_{n1})}{c_{n0} (u_{n1}^0, \cdots, u_{nL}^0 | \phi_{n0})}\label{eq_copula_uncomp}
\end{eqnarray}
where $\phi_{n1}$ and $\phi_{n0}$ are copula parameters under $\mathcal H_1$ and $\mathcal H_0$, respectively, for $n=1,\cdots,N$. In this case, in general,  $N$ copulas where each one is  $L$-variate are selected to model  dependence. Readers may refer to \cite{Nelsen2006,Mercier_2007,iyengar_tsp11,ashok_tsp11,ashok_taes11,Iyengar2011,Subramanian_2011,He_tsp2015} to learn more about copula theory as applicable for binary hypothesis testing problems.

One of  the fundamental challenges in copula theory is to find the  copula density function that will best fit the  given data set. Further, most of the copula density functions proposed in the literature consider the bivariate case. In order to model the dependence of  multimodal data with more than two modalities, several approaches have been proposed in the literature \cite{Subramanian_2011}, which are in general computationally complex. Thus, in order to better utilize copula theory for multimodal data fusion, these challenges need to be overcome. In the following, we consider an alternate  computationally efficient approach for multimodal data fusion in which dependence among data is modeled in a low dimensional transformed domain obtained via CS.  We also discuss  the advantages/disadvantages of modeling dependence in the compressed domain via Gaussian approximation over the copula based approach with uncompressed data.

\section{Fusion  of Spatially Dependent Data in the Compressed Domain via Likelihood Ratio Test }\label{sec_likelihood}
Let $\mathbf A_j$  be specified by a set of unique sampling vectors  $\{\mathbf a^m_{j}\}_{m=1}^{M}$ with $M < N$ for $j=1, \cdots, L$.  We assume that the  $j$-th node  compresses its observations using $\mathbf A_j$ so that the compressed measurement vector is given by,
\begin{eqnarray}
\mathbf y_j = \mathbf A_j \mathbf x_j   \label{obs_1}
\end{eqnarray}
  for $j=1,\cdots,L$ where  the $m$-th element of the vector $\mathbf A_j\mathbf x_j$ is given by $\langle \mathbf a^m_{j} , \mathbf x_j\rangle$ for $m=1,\cdots, M$  where $\langle., .\rangle$ denotes the inner product.  In CS theory, the mapping $\mathbf A_j$ is usually  selected to be a random matrix.    In the rest of the paper, we make the following assumptions: (i)  $\mathbf y_j$'s are available at the fusion center without an error, (ii). $\mathbf A_j$ is an  orthoprojector so that $\mathbf A_j\mathbf A_j^T=\mathbf I$, and (iii) the elements of $\mathbf x_j$    are independent of each other for given $j$ under both hypotheses  while  there is (spatial) dependence among $\mathbf x_1,\cdots,\mathbf x_L$ under $\mathcal H_1$ (i.e., there is spatial and temporal independence under $\mathcal H_0$ and temporal independence and spatial dependence under $\mathcal H_1$).

\subsection{Likelihood Ratio  Based Fusion  With Compressed Data}
In order to perform LR  based fusion  based on (\ref{obs_1}), the computation of the joint pdf of $\{\mathbf y_1, \cdots, \mathbf y_L\}$ is required.  When the marginal pdf of each $\mathbf x_j$ is  available, the marginal pdf of each element in $\mathbf y_j$ can be computed as in the following.  The  $m$-th element of $\mathbf y_j$, $\mathbf y_j[m]$,  can be written as,
\begin{eqnarray}
\mathbf y_j[m] = \sum_{n=1}^N \mathbf A_j[m,n] \mathbf x_{j}[n]\label{y_jm}
\end{eqnarray}
for $j=1,\cdots,L$.
 Having the marginal pdfs of $\mathbf x_j[n]$ and using the independence  assumption of $\mathbf x_j[n]$ for $n=1,\cdots,N$, the  pdf of $z=  \mathbf y_j[m]$ can be found after computing the characteristic function  of $z$.
Once the marginal pdfs  of the elements in $\mathbf y_j$ for $j=1,\cdots,L$ are found, copula theory can be used in order to find the joint pdf of the compressive measurement vectors $\mathbf y_1, \cdots, \mathbf y_L$. Letting   $u_j = F_j( \mathbf y_{p}[q])$ for $j=M(p-1)+q$ where  $p=1,\cdots, L$, $q=1, \cdots, M$, the LLR  based on copula functions can be expressed  as,
\begin{eqnarray}
&~&T_{LLR}(\mathbf y)\nonumber\\
&=&\sum_{l=1}^L \sum_{k=1}^M   \log \frac{f_1( \mathbf y_{l}[k])}{f_0( \mathbf y_{l}[k])} + \log \frac{c_1 (u_1, \cdots, u_{ML} | \phi_1^*)}{c_0 (u_1, \cdots, u_{ML} | \phi_0^*)}. \label{copula_2}
\end{eqnarray}
The second term on the right hand side of   (\ref{copula_2}) requires one to find   copula density functions  of $ML$ variables which is computationally very difficult. Since we assume that the elements in $\mathbf x_j$ are independent under any given hypothesis, each element in $\mathbf y_j$ can be approximated by a Gaussian random variable (via   Lindeberg-Feller  central limit theorem assuming that the required conditions are satisfied \cite{cramer_1946,thakshilaj5}) when $N$ is sufficiently  large.

\subsection{Likelihood Ratio Based Fusion with Compressed Data  via Gaussian Approximation}
Assume  that first and  second order statistics of the concatenated data vector $\mathbf x = [\mathbf x_1^T, \dots, \mathbf x_L^T]^T$  are available.
We define additional notation here. Let \begin{eqnarray}
\boldsymbol\beta^i= [{\boldsymbol\beta_1^i}^T \cdots  {\boldsymbol\beta_L^i}^T]^T \label{uncompressed_mean}
 \end{eqnarray} and
  \begin{eqnarray}
 \mathbf D^i =\left[
 \begin{array}{cccc}
 \mathbf D^i_1 & \mathbf D^i_{12} & \cdots & \mathbf D^i_{1L} \\
 \mathbf D^i_{21} &  \mathbf D^i_2 & \cdots & \mathbf D^i_{2L}\\
 \cdots & \cdots &  \cdots & \cdots\\
  \mathbf D^i_{L1} &  \mathbf D^i_{L2} & \cdots & \mathbf D^i_{L}.
 \end{array}\right]\label{uncompressed_cov}
 \end{eqnarray}
  denote the $NL\times 1$ mean vector  and the $NL\times NL$ covariance matrix of   $\mathbf x$ under $\mathcal H_i$ for $i=0,1$   where
     $\boldsymbol\beta_j^i = \mathbb E\{\mathbf x_j | \mathcal H_i\}$, $\mathbf D^i_j = \mathbb E\{(\mathbf x_j - \boldsymbol\beta_j^i)(\mathbf x_j - \boldsymbol\beta_j^i)^T | \mathcal H_i\}$ and $\mathbf D^i_{jk} = \mathbb E\{(\mathbf x_j - \boldsymbol\beta_j^i)(\mathbf x_k - \boldsymbol\beta_k^i)^T | \mathcal H_i\}$  for   $j\neq k$ $k=1,\cdots,L$ and $j=1,\cdots,L$.

      With Gaussian approximation,    the joint pdf of $\mathbf y = [\mathbf y_1^T \cdots \mathbf y_L^T]^T$  is given by  $\mathbf y|\mathcal H_i \sim \mathcal N (\boldsymbol\mu^i, \mathbf C^i)$ where
 $\boldsymbol\mu^i$ and $ \mathbf C^i$ are the notations used to define the  mean vector and the covariance matrix of $\mathbf y$ which are analogous  to the definitions in \eqref{uncompressed_mean} and \eqref{uncompressed_cov}, respectively, with  $\boldsymbol\mu_j^i = \mathbb E\{\mathbf y_j | \mathcal H_i\}$,  $\mathbf C^i_j = \mathbb E\{(\mathbf y_j - \mathbb E\{\mathbf y_j\})(\mathbf y_j - \mathbb E\{\mathbf y_j\})^T | \mathcal H_i\}$, $\mathbf C^i_{jk} = \mathbb E\{(\mathbf y_j - \mathbb E\{\mathbf y_j\})(\mathbf y_k - \mathbb E\{\mathbf y_k\})^T | \mathcal H_i\}$ with  $j\neq k$, $k=1,\cdots,L$ and $j=1,\cdots,L$ for $i=0,1$.   We further denote by  $\mathbf D_x$ ($\mathbf C_y$) the covariance matrix of $\mathbf x$ ($\mathbf y$) where   $\mathbf D_x=\mathbf D^1$ ($\mathbf C_y=\mathbf C^1$ ) under $\mathcal H_1$ and $\mathbf D_x=\mathbf D^0$ ($\mathbf C_y=\mathbf C^0$ ) under $\mathcal H_0$. First and second order statistics of the compressed data are related to that of uncompressed data via
 \begin{eqnarray*}
 \boldsymbol\mu_j^i = \mathbf A_j \boldsymbol\beta_j^i,
 \mathbf C^i_j = \mathbf A_j \mathbf D^i_j \mathbf A_j^T, \mathrm{and~}
 \mathbf C^i_{jk} = \mathbf A_j \mathbf D^i_{jk} \mathbf A_k^T
 \end{eqnarray*}
for $j,k=1,\cdots, L$ and $i=0,1$. Then, we can write,
\begin{eqnarray*}
\boldsymbol\mu^i = \mathbf A \boldsymbol \beta^i \mathrm{~and~}
\mathbf C^i = \mathbf A \mathbf D^i \mathbf A^T
\end{eqnarray*}
where
\begin{eqnarray}
\mathbf A = \left(
\begin{array}{ccccc}
\mathbf A_1 & \mathbf 0 & \cdot  & \cdot &\mathbf 0\\
\mathbf 0 & \mathbf A_2 & \cdot  & \cdot &\mathbf 0\\
 \cdot  &  \cdot  & \cdot  & \cdot & \cdot \\
 \mathbf 0 & \mathbf 0 & \cdot  & \cdot & \mathbf A_L\\
\end{array}\right)\label{eq_A}
\end{eqnarray}
is a $ML \times NL$ matrix. With the assumption  that $\mathbf A_j \mathbf A_j^T = \mathbf I_M$ for $j=1,\cdots,L$, the decision statistic of the LLR based detector is   given by \cite{poor1},
\begin{eqnarray}
\Lambda_{\mathrm{LLR}} (\mathbf y) &=& \frac{1}{2}\mathbf y^T ({\mathbf C^0}^{-1} - {\mathbf C^1}^{-1})\mathbf y \nonumber\\
&+&  ({\boldsymbol\mu^1}^T {\mathbf C^1}^{-1} - {\boldsymbol\mu^0}^T {\mathbf C^0}^{-1}) \mathbf y + \tau_0  \label{stat_Gaussian}
\end{eqnarray}
where $\tau_0 = \frac{1}{2}\left(\log \left(\frac{|\mathbf C^0|}{|\mathbf C^1|}\right) + {\boldsymbol\mu^0}^T {\mathbf C^0}^{-1}\boldsymbol\mu^0 -  {\boldsymbol\mu^1}^T {\mathbf C^1}^{-1}\boldsymbol\mu^1 \right)$.
To compute   the threshold so that the probability of false alarm is kept under a desired value, computation of the pdf of $\Lambda_{\mathrm{LLR}}$ under $\mathcal H_0$ is required. This  is in general computationally intractable, but is possible under certain assumptions on $\mathbf x$. For example, when $\boldsymbol\beta^i=\mathbf 0$ for  $i=0,1$ and the elements of $\mathbf x$  are identical under $\mathcal H_0$ (in addition to independence), we have $\boldsymbol\mu^i=\mathbf 0$ and  $\mathbf C^0= \sigma_0^2 \mathbf I$ where $\sigma_0^2$ denotes the variance of $\mathbf x$ under $\mathcal H_0$. In this case, the threshold can be computed as considered in \cite{poor1} (pages 73-75).  When  such assumptions on $\mathbf x$ cannot be made, we propose  to compute the threshold via simulations.

\subsubsection{Impact of compression on inter-modal dependence}\label{sec_impact_dependence}
With the Gaussian approximation after  compression, the inter-modal dependence is captured only through the covariance matrix. Higher order dependencies of data are not taken into account in the compressed domain.  In particular, $\mathbf D_x$  is compressed via  $\mathbf C_y = \mathbf A \mathbf D_x \mathbf A^T$. To quantify the distortion of $\mathbf D_x$ due to compression, one measure is to consider  the Frobenius norm of   the covariance matrix. We have
\begin{eqnarray}
||\mathbf C_y||_F^2 &=& ||\mathbf A\mathbf D_x\mathbf A^T||_F^2= \mathrm{tr}(\mathbf A \mathbf D_x^T \mathbf A^T \mathbf A \mathbf D_x \mathbf A^T )\nonumber\\
&=&\mathrm{tr}(\mathbf A^T \mathbf A \mathbf D_x^T \mathbf A^T \mathbf A \mathbf D_x  )\approx \frac{M^2}{N^2}||\mathbf D_x||_F^2
\end{eqnarray}where the last approximation is due to  $\mathbf A^T \mathbf A \approx \frac{M}{N} \mathbf I$.
Thus,  the  Frobenius norm of the covariance matrix after compression is reduced  by a factor of $c_r= \frac{M}{N}$ compared to that with uncompressed data. In other  words,  the Gaussian approximation in the compressed domain can capture a compressed version of the covariance matrix of uncompressed data. Compared to other approaches with uncompressed data, the product approach does not capture any form of dependence.  With respect to copula based approaches, it is not very clear how much dependence  can be captured with a given copula function.  Since the covariance matrix is not a direct measure of detection performance, in the following subsection we compare the detection performance of different approaches in terms of the average probability of error.

\subsection{Detection Performance Comparison Between Compressed and  Uncompressed Data via  Average Probability of Error}\label{sec_Bhatt}
In order to quantify the detection performance  of  different approaches with  both uncompressed and compressed data, we consider the Bhattacharya bound (which is a special case of the \emph{Chernoff} bound) which bounds the average probability of error of LR based detectors. We use the notation 'u:product', and 'u:copula-name' for the product approach and the copula based approach with a given copula function stated under 'name', respectively,   with  uncompressed data. The notation 'c:GA' is used to represent the LR based approach with compressed data using the Gaussian approximation.

The Bhattacharya distance between the  two hypotheses with the copula based approach with uncompressed data is given by,
\begin{eqnarray}
\mathcal D_{B}^{\mathrm{u:copula}} (f_1 || f_0) &=&  -\log \int f_1^{1/2}(\mathbf x)f_0^{1/2}(\mathbf x) d \mathbf x\nonumber\\
 &=& -\log \mathbb E_{f_0} \left\{\prod_{n=1}^N\prod_{l=1}^L\left( \frac{f_1^m(\mathbf x_l[n])}{f_0^m(\mathbf x_l[n])} \right)^{1/2} \right.\nonumber\\
&~& \left.c_{n1}^{1/2} (u_{n1}^1, \cdots, u_{nL}^1 | \phi_{n1})\right\} \label{DBt_1}
\end{eqnarray}
where  $f_i^m$ denotes the marginal pdf under $\mathcal H_i$ and  we have $f_0(\mathbf x) = \underset{l,n}{\prod} f_0^m(\mathbf x_l[n])$ since we assume    $\mathbf x_1, \cdots, \mathbf x_L$    to be independent of each other  under $\mathcal H_0$. With the product approach, we have $c_{n1}(\cdot)=1$ and \eqref{DBt_1} reduces to,
\begin{eqnarray}
&~&\mathcal D_{B}^{\mathrm{u:product}} ( (f_1 || f_0)\nonumber\\
 &=& -\log \mathbb E_{f_0} \left\{\prod_{n=1}^N\prod_{l=1}^L\left( \frac{f_1^m(\mathbf x_l[n])}{f_0^m(\mathbf x_l[n])} \right)^{1/2} \right\}. \label{DBt_product}
\end{eqnarray}
On the other hand, the Bhattacharya distance between the two hypotheses with compressed data under Gaussian approximation can be computed as \cite{Moustafa_12}
\begin{eqnarray}
\mathcal D_{B}^{\mathrm{c:GA}} (f_1 || f_0)& =& \frac{1}{8} (\boldsymbol\beta_1 - \boldsymbol\beta_0)^T \Gamma^{\dagger} \boldsymbol\beta_1 - \boldsymbol\beta_0) \nonumber\\
&+& \frac{1}{2} \log \{|\Gamma| |\mathbf A\mathbf D^1\mathbf A^T|^{-1/2}|\mathbf A\mathbf D^0\mathbf A^T|^{-1/2} \}\label{BD_Gaussian}
\end{eqnarray}
where $\Gamma^{\dagger} = \mathbf A^T \Gamma^{-1} \mathbf A$ and $\Gamma = \frac{1}{2}(\mathbf A\mathbf D^1\mathbf A^T+\mathbf A\mathbf D^0\mathbf A^T)$. Using  the Bhattacharya distance, the average probability of error with compressed data, $P_e^c$,  is upper bounded by \cite{poor1},
\begin{eqnarray*}
P_e^c \leq  \frac{1}{2} e^{-\mathcal D_{B}^{\mathrm{c:GA}}} \triangleq  P_{ub}^{\mathrm{c:GA}}.
\end{eqnarray*}

Let $\mathcal D_{B}^{\mathrm{u: gvn}}$, and  $P_{ub}^{\mathrm{u:gvn}}$    be the Bhattacharya distance, and  the upper bound namely the Bhattacharya bound  on the probability of error, respectively,  with  uncompressed data computed using a  given suboptimal approach (e.g., product or copula with a given copula function). Then, we have
\begin{eqnarray}
  P_{ub}^{\mathrm{c:GA}} &\leq& P_{ub}^{\mathrm{u:gvn}} ~~\mathrm{if} ~~\mathcal \mathcal \mathcal D_{B}^{\mathrm{c:GA}} \geq  \mathcal D_{B}^{\mathrm{u:gvn}} \label{Pe_bounds}
    \end{eqnarray}
where $\mathcal D_{B}^{\mathrm{u:gvn}}$ and $\mathcal D_{B}^{\mathrm{c:GA}}$ are computed as in \eqref{DBt_1} and \eqref{BD_Gaussian}, respectively.
In the case where uncompressed data is dependent and  uncorrelated, (\ref{Pe_bounds}) can be further simplified as stated in Proposition \ref{prop1}.
\begin{proposition}\label{prop1}
Let uncompressed data be  dependent  and  uncorrelated so that $\mathbf D^1$ is diagonal. Further, let $\mathbf D_j^i = \sigma_{j,i}^2 \mathbf I$ and $\boldsymbol\beta_j^i=\beta_{j,i}\mathbf 1$ for $i=0,1$ and $j=1,\cdots,L$. Then, we have
  \begin{eqnarray}
  P_{ub}^{\mathrm{c:GA}} &\leq& P_{ub}^{\mathrm{u:gvn}} ~~\mathrm{if} ~~\mathcal D_{B}^{\mathrm{u:gvn}} \leq c_r\rho_B ~\mathrm{and}\nonumber\\
    \end{eqnarray}
    where
    \begin{eqnarray}
\rho_B &=&\frac{N}{2}\left\{\sum_{j=1}^L  \log(\sigma_{j,1}^2 + \sigma_{j,0}^2) - \log (\sigma_{j,1}^2 \sigma_{j,0}^2) \right.\nonumber\\
&+&\left. \frac{(\beta_{j,1} - \beta_{j,0})^2}{2(\sigma_{j,1}^2 + \sigma_{j,0}^2)} \right\}\label{rho_B}
\end{eqnarray}
which   is determined by the statistics of the uncompressed data and $c_r=\frac{M}{N}$ is the compression ratio.
\end{proposition}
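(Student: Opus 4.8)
The plan is to reduce the statement to the implication already recorded in \eqref{Pe_bounds}, namely that $P_{ub}^{\mathrm{c:GA}}\le P_{ub}^{\mathrm{u:gvn}}$ holds whenever $\mathcal D_B^{\mathrm{c:GA}}\ge\mathcal D_B^{\mathrm{u:gvn}}$. Hence it suffices to show that, under the stated structural assumptions on $\mathbf D^0,\mathbf D^1,\boldsymbol\beta^0,\boldsymbol\beta^1$, the compressed-domain Bhattacharya distance in \eqref{BD_Gaussian} evaluates (up to the CS near-isometry approximation used throughout Section \ref{sec_impact_dependence}) to $\mathcal D_B^{\mathrm{c:GA}}=c_r\rho_B$ with $\rho_B$ as in \eqref{rho_B}. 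Granting this, the hypothesis $\mathcal D_B^{\mathrm{u:gvn}}\le c_r\rho_B$ is exactly $\mathcal D_B^{\mathrm{u:gvn}}\le\mathcal D_B^{\mathrm{c:GA}}$, and the conclusion is immediate. So all the work is in evaluating the two terms of \eqref{BD_Gaussian} for the given covariance/mean structure.

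First I would exploit the block structure. Since the data are dependent but uncorrelated, $\mathbf D^1$ is diagonal; under $\mathcal H_0$ we already assume spatial and temporal independence, so $\mathbf D^0$ is diagonal as well; combined with $\mathbf D_j^i=\sigma_{j,i}^2\mathbf I_N$ this gives $\mathbf D^i=\mathrm{blkdiag}(\sigma_{1,i}^2\mathbf I_N,\dots,\sigma_{L,i}^2\mathbf I_N)$ for $i=0,1$. Because $\mathbf A$ in \eqref{eq_A} is block diagonal with blocks $\mathbf A_j$ satisfying $\mathbf A_j\mathbf A_j^T=\mathbf I_M$, it follows that $\mathbf A\mathbf D^i\mathbf A^T=\mathrm{blkdiag}(\sigma_{1,i}^2\mathbf I_M,\dots,\sigma_{L,i}^2\mathbf I_M)$, and therefore $\Gamma=\tfrac12(\mathbf A\mathbf D^1\mathbf A^T+\mathbf A\mathbf D^0\mathbf A^T)=\mathrm{blkdiag}\big(\tfrac12(\sigma_{j,1}^2+\sigma_{j,0}^2)\mathbf I_M\big)_{j=1}^L$. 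All three determinants in \eqref{BD_Gaussian} now factor over the $L$ diagonal blocks, each block being a scalar multiple of $\mathbf I_M$, so the log-determinant term collapses to $\tfrac{M}{2}\sum_{j=1}^L\log\frac{\sigma_{j,1}^2+\sigma_{j,0}^2}{2\sigma_{j,1}\sigma_{j,0}}$, i.e.\ to $c_r$ times the logarithmic part of $\rho_B$ in \eqref{rho_B}.

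The only step that is not pure bookkeeping is the quadratic mean term. With $\boldsymbol\beta_j^i=\beta_{j,i}\mathbf 1_N$, the vector $\boldsymbol\beta^1-\boldsymbol\beta^0$ has $j$-th block $(\beta_{j,1}-\beta_{j,0})\mathbf 1_N$, while $\Gamma^{\dagger}=\mathbf A^T\Gamma^{-1}\mathbf A$ is block diagonal with $j$-th block $\frac{2}{\sigma_{j,1}^2+\sigma_{j,0}^2}\mathbf A_j^T\mathbf A_j$. Hence $\tfrac18(\boldsymbol\beta^1-\boldsymbol\beta^0)^T\Gamma^{\dagger}(\boldsymbol\beta^1-\boldsymbol\beta^0)=\tfrac18\sum_{j=1}^L\frac{2(\beta_{j,1}-\beta_{j,0})^2}{\sigma_{j,1}^2+\sigma_{j,0}^2}\,\|\mathbf A_j\mathbf 1_N\|_2^2$. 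I would then invoke the same approximation $\mathbf A_j^T\mathbf A_j\approx\tfrac{M}{N}\mathbf I_N$ used for the Frobenius-norm computation in Section \ref{sec_impact_dependence}, which gives $\|\mathbf A_j\mathbf 1_N\|_2^2=\mathbf 1_N^T\mathbf A_j^T\mathbf A_j\mathbf 1_N\approx\tfrac{M}{N}N=M$, so the mean term is $\approx\tfrac{M}{4}\sum_{j=1}^L\frac{(\beta_{j,1}-\beta_{j,0})^2}{\sigma_{j,1}^2+\sigma_{j,0}^2}=c_r\cdot\tfrac{N}{4}\sum_{j=1}^L\frac{(\beta_{j,1}-\beta_{j,0})^2}{\sigma_{j,1}^2+\sigma_{j,0}^2}$, precisely $c_r$ times the mean part of $\rho_B$. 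Adding the log-determinant and mean contributions and factoring out $c_r=M/N$ yields $\mathcal D_B^{\mathrm{c:GA}}=c_r\rho_B$, and combining with \eqref{Pe_bounds} finishes the proof. The one genuine obstacle is this $\|\mathbf A_j\mathbf 1_N\|_2^2$ estimate: it is where the near-isometry of the random orthoprojector enters and what makes the statement an approximate (large-$M,N$) identity rather than an exact one, so the proof should state explicitly the regime under which $\mathbf A_j^T\mathbf A_j\approx\tfrac{M}{N}\mathbf I_N$ is used; everything else is elementary manipulation of block-diagonal matrices.
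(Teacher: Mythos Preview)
Your proposal is correct and follows exactly the same route as the paper's own proof: reduce to \eqref{Pe_bounds} and then show that under the stated block-scalar covariance and constant-mean assumptions the compressed Bhattacharya distance \eqref{BD_Gaussian} collapses to $c_r\rho_B$. The paper's proof is a one-liner asserting this reduction; you simply fill in the block-diagonal determinant and quadratic-form computations (and you are right to flag that the mean term needs the $\mathbf A_j^T\mathbf A_j\approx\tfrac{M}{N}\mathbf I_N$ approximation used elsewhere in Section~\ref{sec_impact_dependence}).
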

\begin{proof}
The proof follows from the fact that when uncompressed data is uncorrelated under $\mathcal H_1$,   $\mathcal D_{B}^{c,G} $ in \eqref{BD_Gaussian} reduces to,
\begin{eqnarray}
\mathcal D_{B}^{c,G}(f_0||f_1) = \frac{M}{N} \rho_B
\end{eqnarray}
 where $\rho_B$,  are as defined in \eqref{rho_B}.
\end{proof}
Thus, whenever $\mathcal D_{B}^{\mathrm{c:GA}} > \mathcal D_{B}^{\mathrm{u:gvn}}$ `c:GA' performs better than  any given suboptimal approach with uncompressed data. Even though $\mathcal D_{B}^{\mathrm{c:GA}} < \mathcal D_{B}^{\mathrm{u:gvn}}$,  `c:GA' can still  be promising   if the desired performance level in terms of the upper bound on the probability of error is reached. Let $\epsilon_B$ be  the desired upper bound on the probability of error. When $\mathcal D_{B}^{\mathrm{c:GA}} \geq -\log(2\epsilon_B)$,  `c:GA'  provides  the desired performance even if  $\mathcal D_{B}^{\mathrm{c:GA}} < \mathcal D_{B}^{\mathrm{u:gvn}}$.

\subsection{Illustrative Examples}\label{example}
In  the following, we consider   example scenarios  to illustrate the  detection performance with `c:GA' compared to that with uncompressed data using  different suboptimal approaches. In the two examples, two types of detection problems are  considered. In the first example, we consider a problem of detection of changes in statistics of  data collected at heterogeneous sensors. In the second example, detection of a random source by heterogeneous sensors in the presence of noise is considered.
\subsubsection{Example  1}
In the first example, we consider  $L=3$ and  a common random phenomenon
causes a change in the statistics of heterogeneous data at the three sensors. The uncompressed data at the three nodes have the following marginal pdfs: \cite{iyengar_tsp11}:
\begin{eqnarray}
x_{n1} | \mathcal H_i \sim \mathcal N(0,\sigma_i^2), ~ x_{n2} | \mathcal H_i \sim \mathrm{Exp} (\lambda_i)  \nonumber\\
 ~\mathrm{and} ~x_{n3} | \mathcal H_i \sim \mathrm{Beta}(a_i, b_i=1)
 \end{eqnarray} for $i=0,1$.
 It is noted that   $x\sim \mathrm{Exp} (\lambda)$ denotes that $x$ has an  exponential distribution with $f(x) = \lambda e^{-\lambda x}$ for $x\geq 0$ and $0$ otherwise, and $x\sim \mathrm{Beta}(a, b)$ denotes that $x$ has a beta distribution with pdf $f(x) = \frac{1}{\mathcal B(a,b)}x^{a-1} (1-x)^{b-1}$ and  $\mathcal B(a,b) = \frac{\Gamma(a)\Gamma(b)}{\Gamma(a+b)}$ is the beta function.  The data under  $\mathcal H_1$ is assumed to be dependent and the following operations are used to generate dependent data. For the data at the second node, we use
$
x_{n2} =x_{n1}^2+w^2
$ for $n=1,\cdots, N$
where $w\sim \mathcal N(0, \sigma_1^2)$. Then,  we have $x_{n2}\sim \mathrm{Exp} (\lambda_1)$ with $\lambda_1=\frac{1}{2\sigma_1^2}$. For the third node, the data under $\mathcal H_1$ is  generated as \begin{eqnarray*}
x_{n3} =\frac{u}{u+x_{n2}}
\end{eqnarray*}
for $n=1,\cdots, N$
where $u\sim \mathrm{Gamma}(\alpha_1, \beta_1=1/\lambda_1)$.  Then $x_{n3} | \mathcal H_1 \sim \mathrm{Beta}(a_1, b_1=1)$ with $a_1=\alpha_1$. It is noted that $x\sim \mathrm{Gamma}(\alpha, \beta)$ denotes that $x$ has Gamma pdf with $f(x) = \frac{1}{\beta^{\alpha}\Gamma(\alpha)} x^{\alpha-1} e^{-x / \beta}$ for $x\geq 0$ and $\alpha, \beta > 0$. Under $\mathcal H_0$, $x_{n1}$,  $x_{n2}$ and $x_{n3}$ are generated independently using the assumed marginal pdfs. In  this example, we consider three cases.

\subsubsection*{Case I}
In Case I, the data at the first and second sensors are fused.
In this case, the covariance matrices  of $\mathbf x = [\mathbf x_1^T~\mathbf x_2^T]^T$ under the  two hypotheses, $\mathbf D^0$ and $\mathbf D^1$,  are  composed of $\mathbf D_1^0 = \sigma_0^2 \mathbf I$, $\mathbf D_{12}^0=\mathbf D_{21}^0=\mathbf 0$, $\mathbf D_2^0 =  \frac{1}{\lambda_0^2}\mathbf I$  under $\mathcal H_0$ and $\mathbf D_1^1 = \sigma_1^2 \mathbf I$, $\mathbf D_{12}^1=\mathbf D_{21}^1=\mathbf 0$, $\mathbf D_2^1 =  \frac{1}{\lambda_1^2}\mathbf I$  under $\mathcal H_1$, respectively. It is  worth noting  that $\mathbf D^1$ is diagonal in this case. Thus, although $\mathbf x_1$ are $\mathbf x_2$ are spatially dependent under $\mathcal H_1$ (by construction), they are uncorrelated, , i. e., higher-order statistics exhibit dependence while the second-order correlation is zero.

\subsubsection*{Case II }\label{example2}
For Case II, we consider the fusion of data at the  second and third sensors where $\mathbf x=[\mathbf x_2^T ~ \mathbf x_3^T]^T$.
In this case, we have $\mathbf D_2^0 = \frac{1}{\lambda_0^2} \mathbf I$, $\mathbf D_{23}^0=\mathbf D_{32}^0=\mathbf 0$, $\mathbf D_3^0 =  \frac{a_0}{(a_0+1)^2(a_0+2)}\mathbf I$  under $\mathcal H_0$ and $\mathbf D_2^1 = \frac{1}{\lambda_1^2}\mathbf I$, $\mathbf D_{23}^1=\mathbf D_{32}^1=\left(\mathbb E_{x_{n1}u}\{\frac{x_{n1}u}{u+x_{n1}}\} - \frac{a_1}{\lambda_1(a_1+1)}\right)\mathbf I$, $\mathbf D_3^1 =  \frac{a_1}{(a_1+1)^2(a_1+2)}\mathbf I$  under $\mathcal H_1$, respectively. It is  noted that $\mathbf D^1$ is not diagonal in this case.

\subsubsection*{Case  III }\label{example3}
In Case III, we consider the fusion of data at all three senors.   Also $\mathbf D^1$ is not diagonal in this case as well.
\begin{figure}[h!]
\centerline{\epsfig{figure=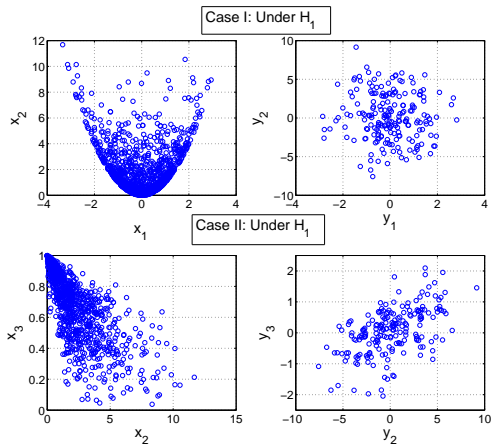,width=7.50cm}}
\caption{Scatter plots  of uncompressed and compressed data under $\mathcal H_1$ in Example 1; $N=1000$, $M=200$, $L=2$}\label{fig:scatter_L2}
\end{figure}
\subsubsection*{Scatter plots of uncompressed and compressed data}
First,  we illustrate  how the dependence  structure of   data  changes  going  from the uncompressed domain to the compressed domain. In  Fig.  \ref{fig:scatter_L2},  we show the scatter plots for both compressed and uncompressed data at the  two sensors under $\mathcal H_1$ for Cases I and II.  In Fig.  \ref{fig:scatter_L2}, the top and bottom subplots are for Case  I and Case  II, respectively while the left and right subplots are for uncompressed and compressed data, respectively. It can be observed that while uncompressed data at the two sensors are strongly dependent,  compressed data appears to be  weakly dependent. This  change of the dependence structure due to compression  was addressed in Section \ref{sec_impact_dependence}.  In this example, the scatter plots of compressed data look
more circular (Case  I) or elliptical (Case  II). In Case 1, even though  $x_{n1}$ and $x_{n2}$ for given $n$ are dependent under $\mathcal H_1$, they are uncorrelated. This leads to a circular and independent  scatter plot for compressed data for Case I.

\begin{figure}[h!]
    \centering
    \begin{subfigure}[b]{0.45\textwidth}
        \includegraphics[width=\textwidth]{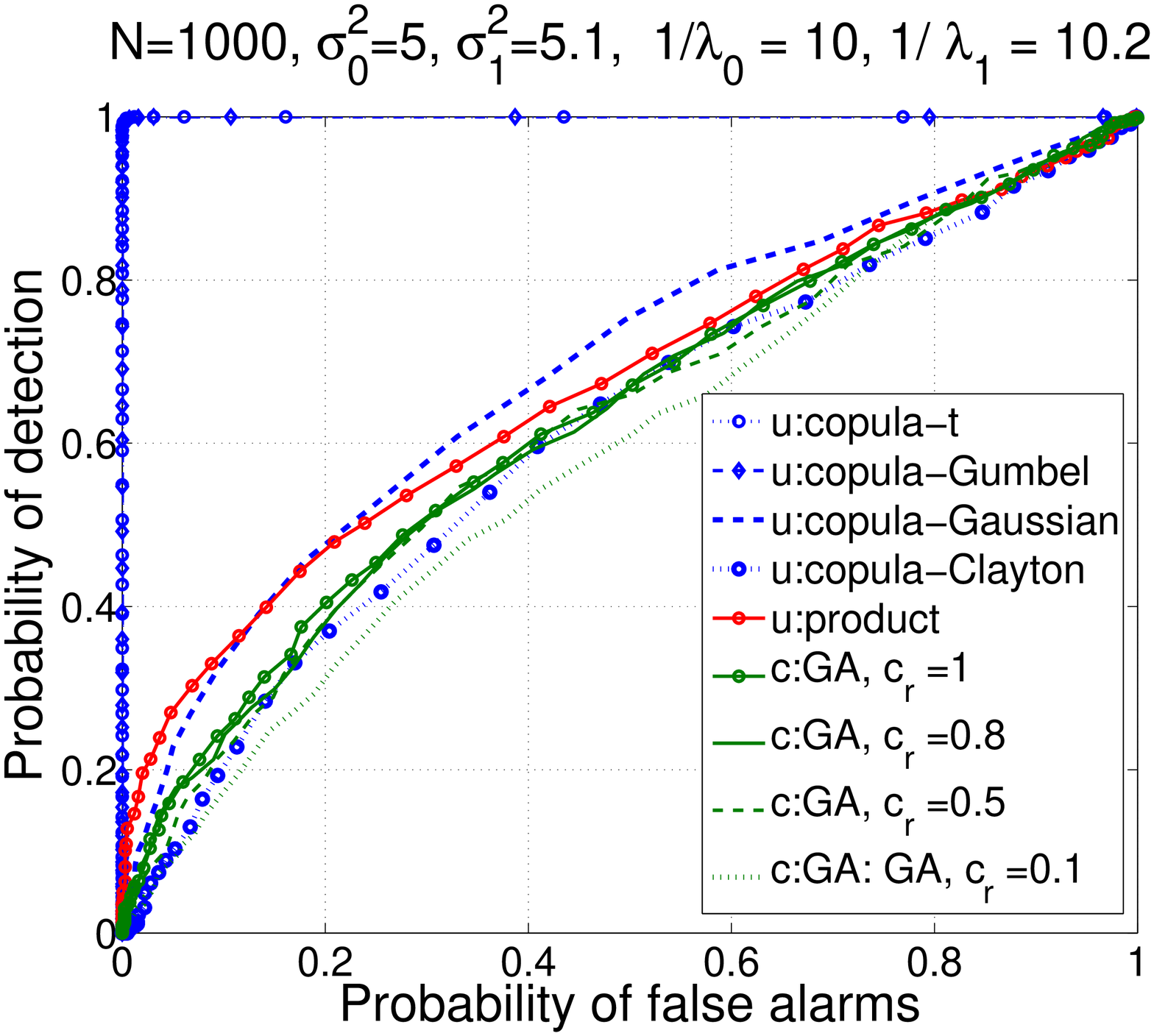}
        \caption{ Case I}
        \label{fig:N100}
    \end{subfigure}
    ~ 
    \begin{subfigure}[b]{0.45\textwidth}
        \includegraphics[width=\textwidth]{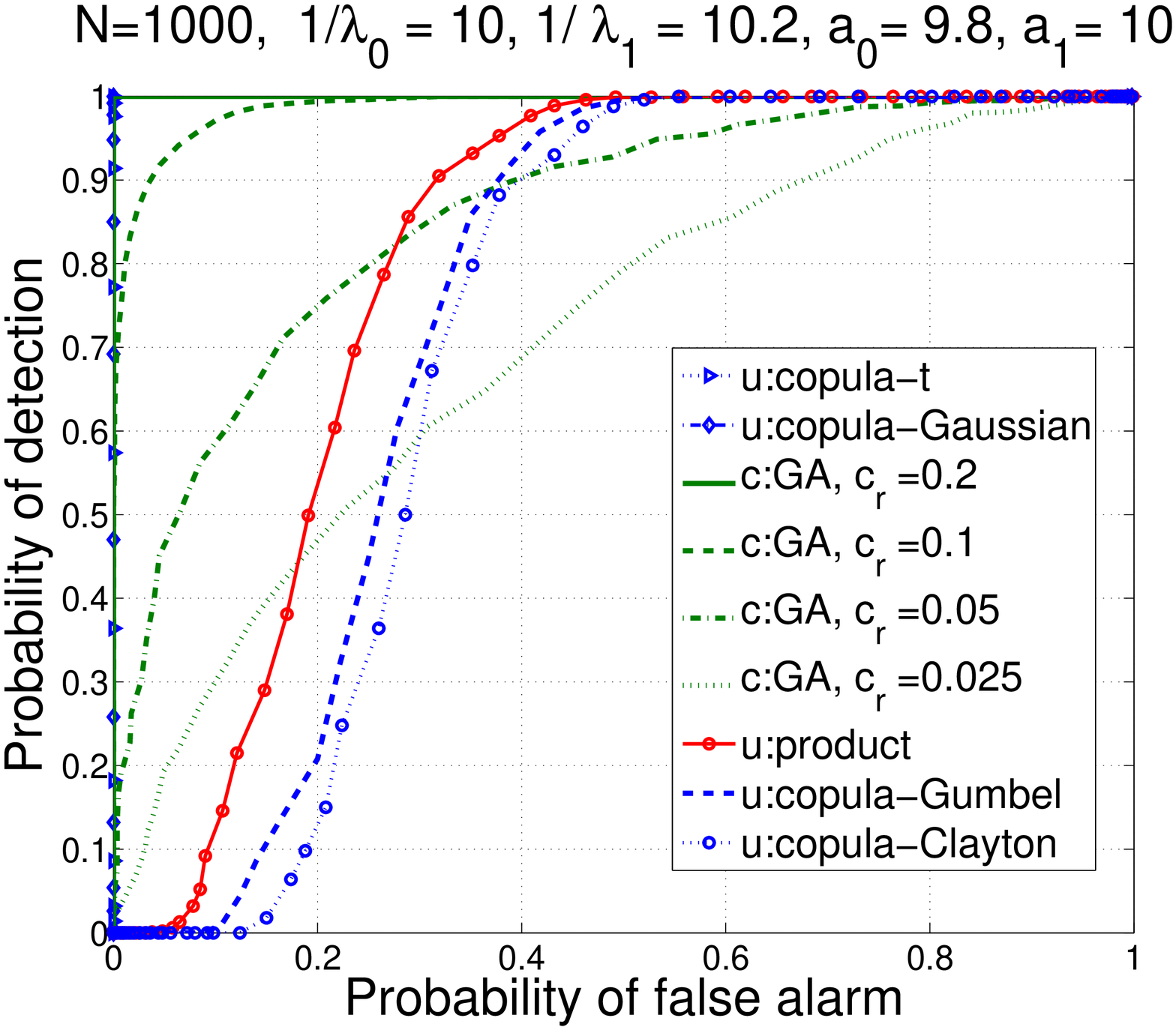}
        \caption{ Case II}
        \label{fig:N1000}
    \end{subfigure}
~
        \begin{subfigure}[b]{0.45\textwidth}
        \includegraphics[width=\textwidth]{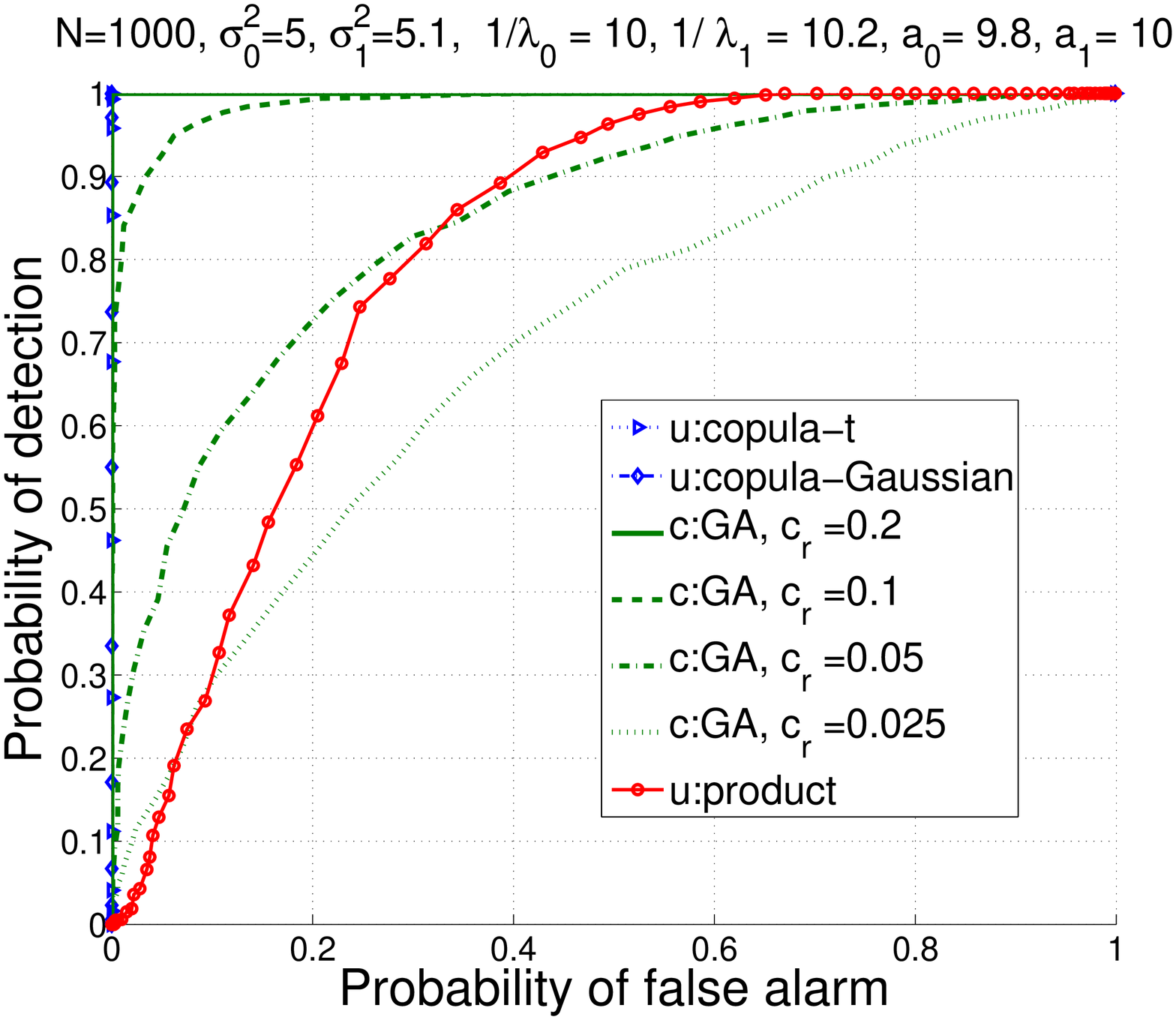}
        \caption{ Case III}
        \label{fig:N1000}
    \end{subfigure}
        \caption{Performance of  dependent  data fusion for detection in Example 1: product/copula based  approach with uncompressed data vs  Gaussian approximation with compressed data:  $N=1000$}\label{fig:ROC_case_I_II}
\end{figure}

\subsubsection*{Detection  With uncompressed data vs. detection  with compressed data via Gaussian approximation }
We compare the detection performance of LR based detection with compressed and uncompressed data. The compressed detector with Gaussian approximation, `c:GA' is compared with  the product approach (where   dependence  is ignored), `u:product', and the copula based approach, `u:copula-name', with uncompressed data. For the copula based approach, we consider Gaussian, t, Gumbel and Clayton copula functions  as described in \cite{iyengar_tsp11,He_tsp2015} for the bivariate case (Cases I and II) and Gaussian and t copula for the tri-variate case (Case III).   Fig. \ref{fig:ROC_case_I_II} shows the  performance in terms of the ROC curves for the three  cases considered in the  example. The  parameter values are provided   in figure titles.   To obtain the ROC curves, $10^3$ Monte Carlo runs were performed throughout unless otherwise specified.  With the considered parameter  values under the  two hypotheses, `u:product' does not provide perfect detection.
We make several important observations here.

 \begin{itemize}
\item For Case I where uncompressed data at the first two sensors are dependent and uncorrelated ($\mathbf D^1$ is diagonal),  `u:product',  `u:copula-t', and `u:copula-Gumbel'   perform much   better than `c:GA' even with $c_r=1$ as can be seen in Fig. \ref{fig:ROC_case_I_II}(a). In this case, with diagonal $\mathbf D^1$, existing higher-order    dependence is not taken into account in the compressed domain.
    \item For Case II where $\mathbf D^1$ is not diagonal, as can be seen in Fig. \ref{fig:ROC_case_I_II}(b), `c:GA' shows a significant performance gain  over `u:product'  after $c_r$ exceeds a certain threshold. Fusion   with `u:copula-Gaussian' and `u:copula-t'  leads to perfect detection while the  fusion performance with `c:GA' with fairly small value of $c_r$  is also capable of providing perfect detection for  the  parameter values  considered.
        \item For Case III, similar  results are seen as in Case II when `c:GA' is compared with `u:product',  `u:copula-Gaussian',  and `u:copula-t'.

           \item  In Cases II and III, dependence is taken into account via the covariance matrix in the compressed domain as discussed in Section \ref{sec_impact_dependence}. Thus, irrespective of the dimensionality reduction, due to the capability to capture  a certain amount of  dependence in the compressed domain,`c:GA' is capable of providing a significant performance gain over  `u:product' and comparable performance compared to  `u:copula-Gaussian',  and `u:copula-t'.
               \item When going from Case II to Case III (i.e., from two sensors to three sensors), `c:GA' does not show a  significant performance improvement for a given value of  $c_r$. This is because, only $\mathbf x_2$ and $\mathbf x_3$ are spatially correlated, and $\mathbf x_1$ is uncorrelated with the rest. Thus, the covariance information accounted for  in the compressed domain is the same for both cases.
\end{itemize}

We further  illustrate the behavior of the upper bound on the probability of error for Case II.   In  Fig. \ref{fig:BD_KL_case_II},  we plot the Bhattacharya distance and the upper bound on the probability of error on the left and right subplots, respectively,  for the same parameter values as in Fig. \ref{fig:ROC_case_I_II}(b).
\begin{figure}[h!]
\centerline{\epsfig{figure=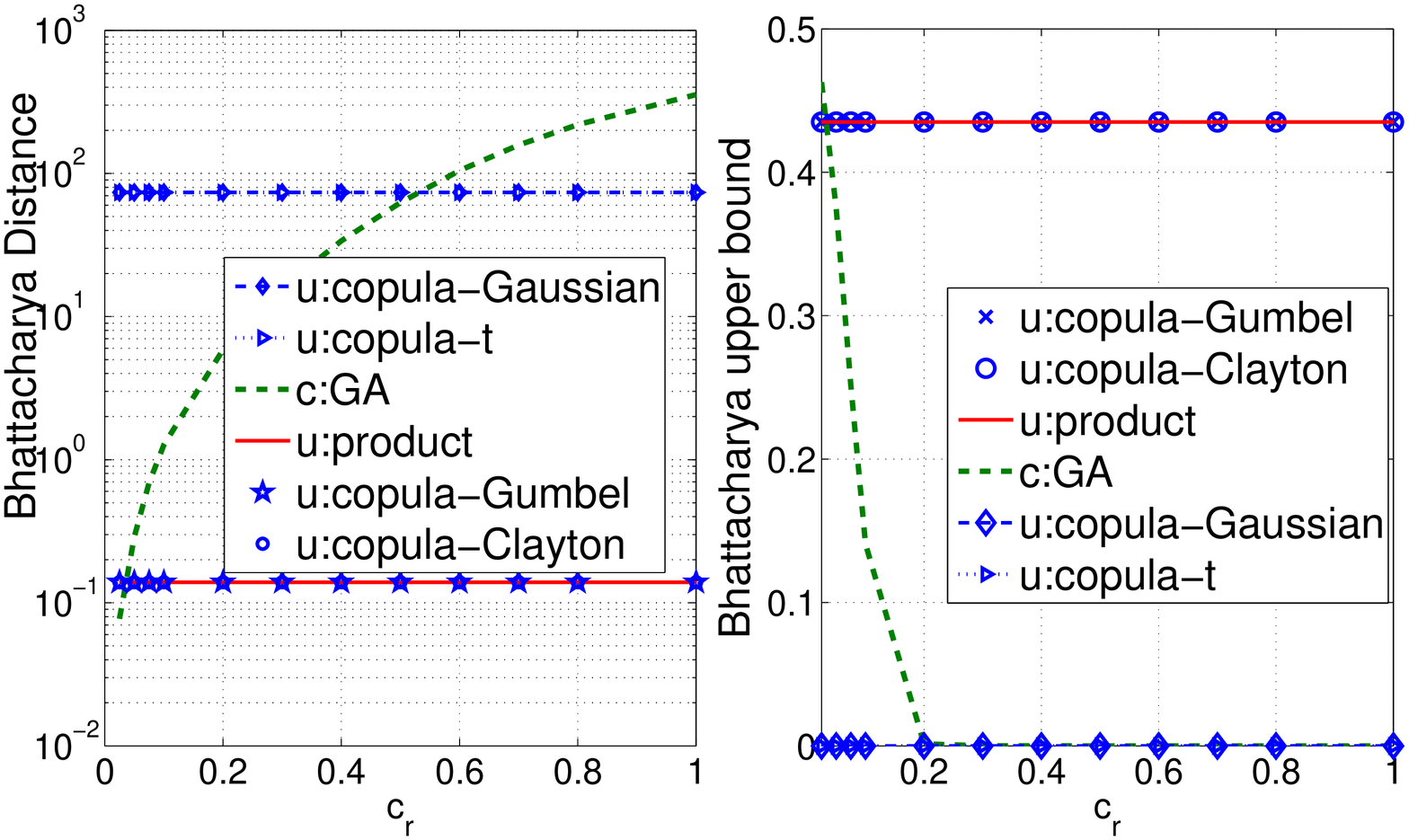,width=9.20cm}}
\caption{Bhattacharya distance and the upper on $P_e$  vs $c_r$ with compressed and uncompressed data for Case II in Example 1:  $N=1000$, $1/\lambda_0=10$, $1/\lambda_1=10.2$, $a_0=9.8$, $a_1=10$}\label{fig:BD_KL_case_II}
\end{figure}
It is seen that $\mathcal D_{B}^{\mathrm{c:GA}}$ is much larger  than $\mathcal D_{B}^{\mathrm{u:product}}$ for almost all  the values of $c_r$. Based on the distance measures shown in the left subplot, it is expected for `u:copula-Gaussian',  and `u:copula-t' to perform better than CS based detection for smaller values of $c_r$. However, as can be seen in the right subplot in Fig \ref{fig:BD_KL_case_II}, the upper bound on the probability of error with `c:GA' coincides ($\rightarrow 0$) that with `u:copula-Gaussian',  and `u:copula-t'   when $c_r$ exceeds a certain value.   This observation is intuitive since these distance measures are not linearly related to the  probability of error. Thus,  even though $\mathcal D_{B}^{\mathrm{c:GA}} < \mathcal D_{B}^{\mathrm{u:gvn}}$ with a given suboptimal approach, compressed detection via `c:GA' can be promising, as discussed in Subsection \ref{sec_Bhatt}.

\begin{figure}[h!]
    \centering
     \begin{subfigure}[b]{0.4\textwidth}
        \includegraphics[width=\textwidth]{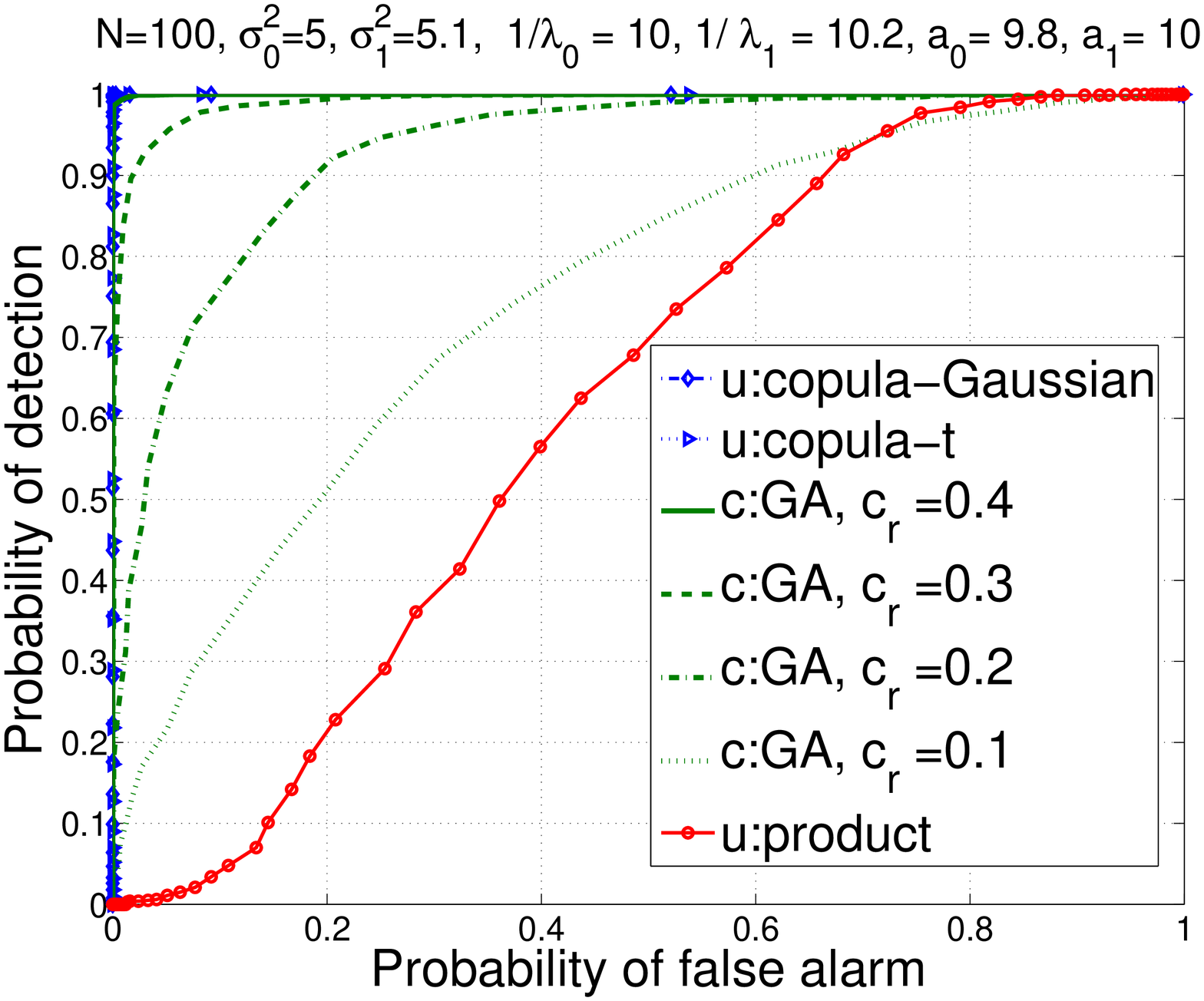}
        \caption{ Case III}
        \label{fig:N1000}
    \end{subfigure}
    ~
    \begin{subfigure}[b]{0.4\textwidth}
        \includegraphics[width=\textwidth]{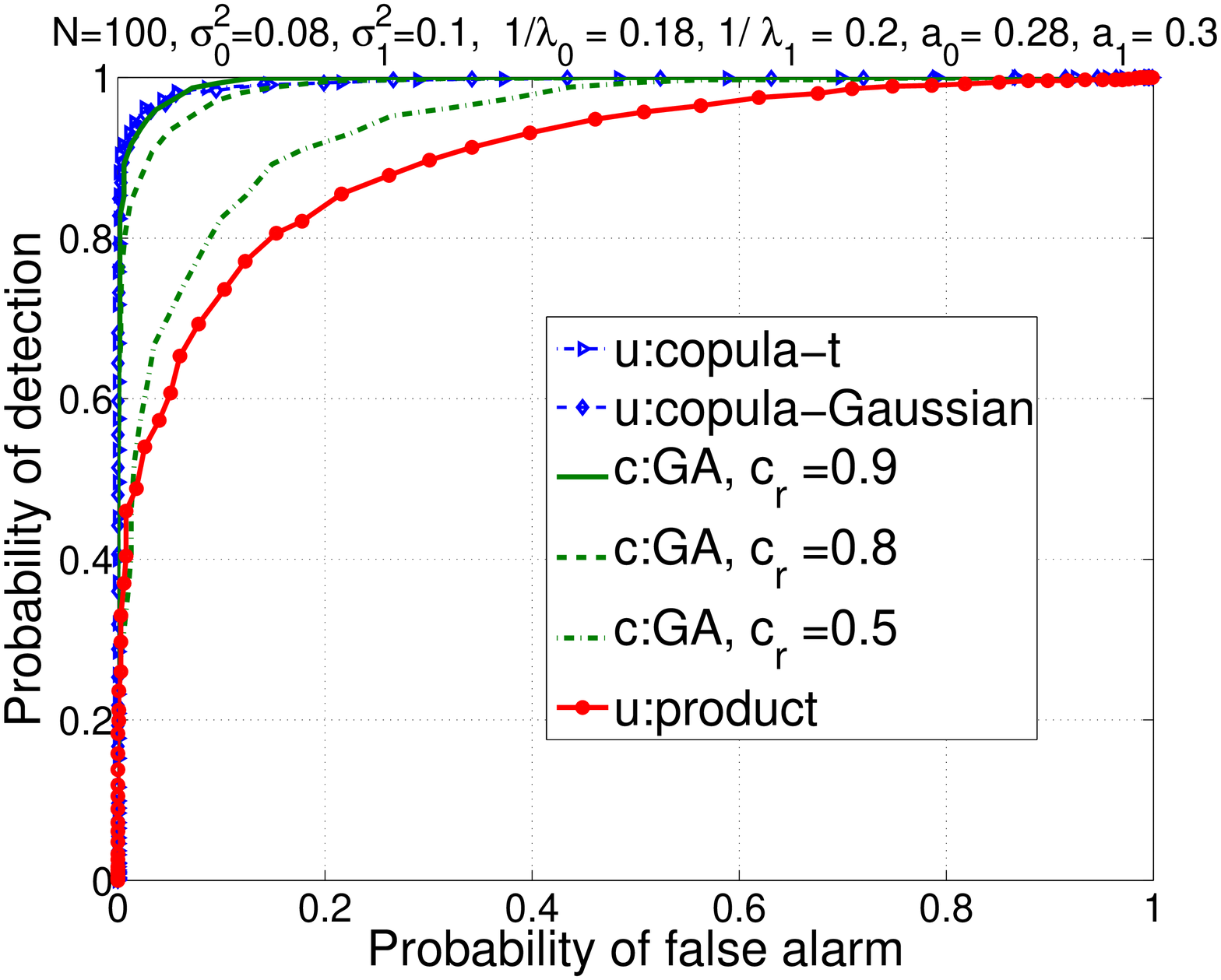}
        \caption{ Case III}
        \label{fig:N100}
    \end{subfigure}
       \caption{Performance of  dependent  data fusion for detection in Example 1: product/copula based  approach with uncompressed data vs  Gaussian approximation with compressed data; $N=100$ }\label{fig_copula_v2}
\end{figure}
Since the difference in performance  among different approaches varies  as the parameter values of the statistics under two hypotheses change, in Fig. \ref{fig_copula_v2}, we show the ROC curves with another set of parameter values considering  Case III for $N=100$.  It can be seen that  when the statistics of the data under the two hypotheses are such that `u:product' does not provide perfect detection, `c:GA' outperforms `u:product' when $c_r$ exceeds a certain threshold. Further, as $c_r$ increases,  `c:GA' shows  similar performance  as with `u:copula-t' and `u:copula-Gaussian'. Results in Fig. \ref{fig_copula_v2} again verify  that the amount of dependence captured in the compressed domain via Gaussian approximation leads to better detection  performance than `u:product' and similar performance as with the copula based approaches.

\subsubsection{Example 2}
In the second example,  we consider  the  detection of a signal in the presence of noise with $L=2$ where the signals  of interest at the two nodes are (spatially)  dependent of each other under $\mathcal H_1$. The  model for heterogeneous  uncompressed sensor data is given by
\begin{eqnarray}
\mathcal H_1&:& \mathbf  x_j = \mathbf s_j + \mathbf v_j\nonumber\\
\mathcal H_0&:& \mathbf  x_j = \mathbf v_j
\end{eqnarray}
for $j=1,2$. The noise vector $\mathbf v_j$ is assumed to be Gaussian with mean vector $\mathbf 0$ and covariance matrix $\sigma_v^2 \mathbf I$. We assume that the $n$-th elements of  $\mathbf s_1$ and $\mathbf s_2$, respectively,  are governed by a common random phenomenon so that they are dependent. For illustration purposes, we assume that the dependence model is given by:  $s_{n1} = s_n^2 + w_{n1}^2$, $s_{n2} = s_n^2 + u_{n1}^2+u_{n2}^2$, where the random variables $s_n, w_{n1}, u_{n1}, u_{n2}$ are iid Gaussian with mean zero and variance $\sigma_s^2$.  With this model, $s_{n1}\sim \mathrm{exp}(\lambda_1)$ with $\lambda_1 = \frac{1}{2 \sigma_s^2}$ and $\frac{s_{n2}}{\sigma_s^2} \sim \mathcal X_3^2$ where $x\sim \mathcal X_{\nu}^2$ denotes that $x$ has a chi-squared pdf with degree of freedom $\nu$. Then, it can be shown that the marginal pdfs of $x_{n1}$ and $x_{n2}$ are given by $f_1( x_{n1}|\mathcal H_1) = \lambda_1 e^{-\lambda_1 x_{n1}} e^{\frac{\sigma_v^2\lambda_1^2}{2}}\left(1-Q\left(\frac{x_{n1}-\sigma_v^2 \lambda_1}{\sigma_v}\right)\right)$ where $Q(\cdot)$ denotes the Gaussian $Q$ function  and $f_1( x_{n2}|\mathcal H_1) = \frac{\sqrt{\sigma_v}}{2\pi \sigma_s^3}e^{\frac{1}{8\sigma_v^2\sigma_s^4}} e^{-\frac{1}{4\sigma_v^2}\left(x_{n2}+\frac{\sigma_v^2}{2\sigma_s^2}\right)^2} G_{-3/2} \left(\frac{\sigma_v^2 - 2\sigma_s^2 x_{n2}}{2\sigma_s^2 \sigma_v}\right)$ where $G_p(z) = \frac{e^{-\frac{z^2}{4}}}{\Gamma(-p)}\int_0^{\infty} e^{-xz -\frac{x^2}{2}} x^{-p-1} dx$ with $p < 0$.  Under $\mathcal H_0$, $x_{n1}$ and $x_{n2}$ have Gaussian pdfs with mean zero and variance $\sigma_v^2$.  In this example, we have non-diagonal $\mathbf D^1$ with $\mathbf D^1_1=(\sigma_v^2+\frac{1}{\lambda_1^2})\mathbf I$, $\mathbf D_{12}^1=\mathbf D_{12}^1=2 \sigma_s^4\mathbf I$ and $\mathbf D_2^1 =  (\sigma_v^2+6\sigma_s^4)\mathbf I$.

\begin{figure}[h!]
    \centering
     \begin{subfigure}[b]{0.4\textwidth}
        \includegraphics[width=\textwidth]{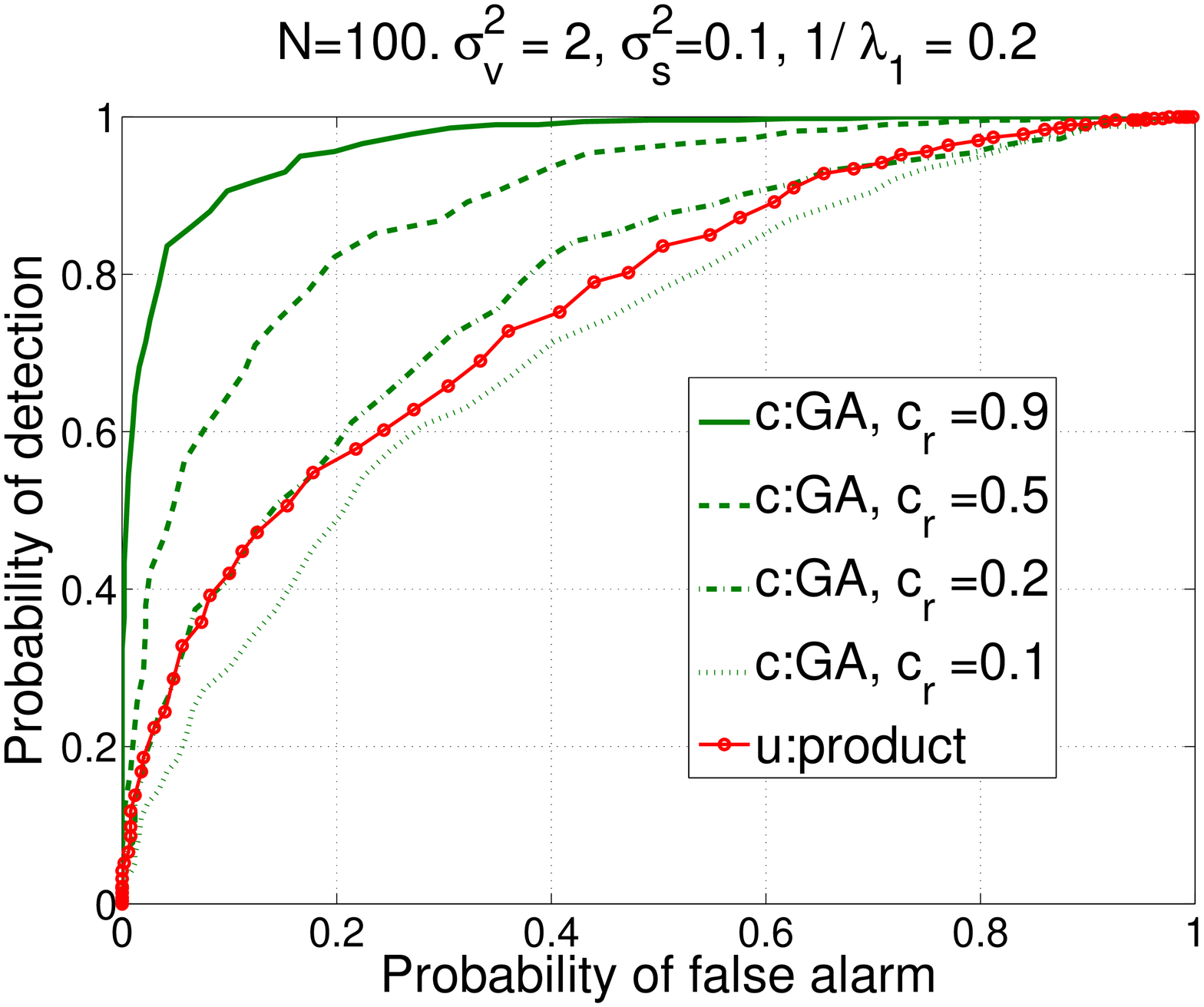}
        \caption{$N=100$,  $\sigma_v^2=2$, $\sigma_s^2=0.1$, $1/\lambda_1=0.2$}
        \label{fig:N1000}
    \end{subfigure}
    ~
    \begin{subfigure}[b]{0.4\textwidth}
        \includegraphics[width=\textwidth]{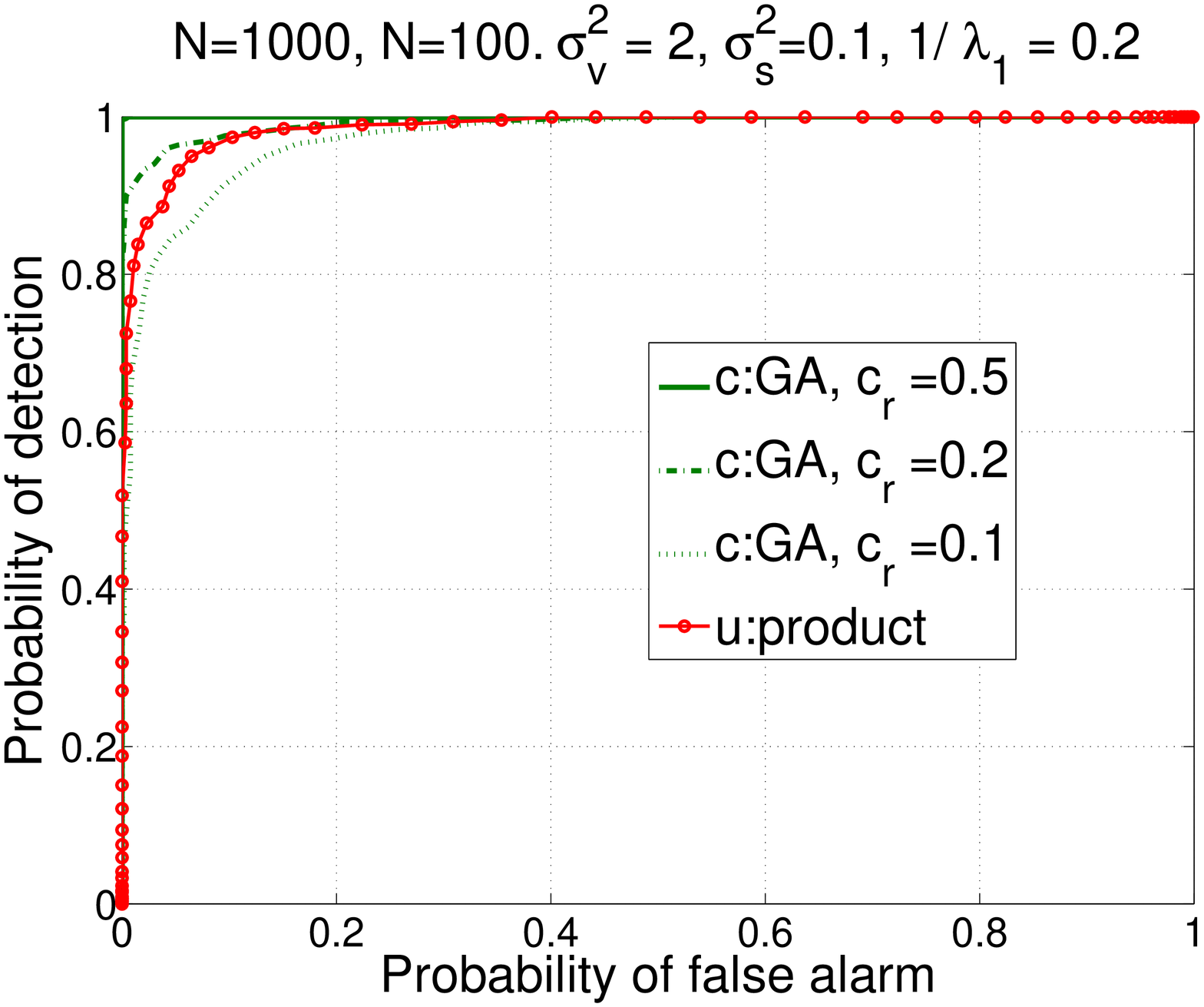}
        \caption{ $N=1000$,  $\sigma_v^2=2$, $\sigma_s^2=0.1$, $1/\lambda_1=0.2$}
        \label{fig:N100}
    \end{subfigure}
       \caption{Performance of  dependent  data fusion for detection in Example 2: Product  approach with uncompressed data vs  Gaussian approximation with compressed data}\label{fig_Eg_2}
\end{figure}
In Figs. \ref{fig_Eg_2}(a) and  \ref{fig_Eg_2}(b),  we plot the ROC curves for  $N=100$ and $N=1000$, respectively. The considered values for $\sigma_v^2$, $\sigma_s^2$ (and $\frac{1}{\lambda_1}= 2\sigma_s^2$)  are stated in the figure captions. We compare the performance of `c:GA'  with that of `u:product'. With the considered parameter values, the performance of `u:product'  with $N=100$ is not very good. However, `c:GA' performs significantly better than `u:product' as $c_r$ increases. With $N=1000$, `u:product' shows almost close to prefect detection while  similar or better performance  is achieved with `c:GA'  with a very small value for  $c_r$. As shown in Table \ref{table_example2},  `u:product'  in this example consumes a significant amount of computational power compared to `c:GA'. Fig. \ref{fig_Eg_2} and Table \ref{table_example2} verify the applicability of the proposed approach in  the presence of spatially   dependent and correlated data in terms of both performance and computational complexity.

\begin{remark}
In Examples 1 and 2, the parameter values are selected such that the mean  parameters of uncompressed data under the  two hypotheses at a given node  are not significantly different from each other. Otherwise, `u:product' can work well  since then the second or higher order statistics are not significant to distinguish between the two hypotheses. In such scenarios, efforts to  model dependence do not carry additional benefits to the fusion problem, thus such scenarios are not  of   interest in this paper.     \end{remark}

\begin{remark}
Covariance matrices, which measure the degree of linear dependence,
partially describe the dependence structure of multivariate data (when the variables are multivariate Gaussian, this description is complete).    In particular, when  the uncompressed data is non-Gaussian, dependent, and uncorrelated,   the  covariance   information is not capable of characterizing the true dependence. Thus, when such data is compressed via random projections, the dependence information is unaccounted for in the compressed domain while performing `c:GA'.
\end{remark}

\begin{remark}
When  the uncompressed  data is non-Gaussian, dependent and correlated,   the  covariance   information partially characterizes  the true dependence. In this case, when the  data is compressed via random projections, the dependence information characterized by the covariance matrix  (with a certain distortion/change)  is partially accounted for in the compressed domain while performing `c:GA'.
\end{remark}
\subsection{Computational and Communication Complexity}
With `c:GA', the computational complexity of  computing  the decision statistic \eqref{stat_Gaussian} is dominated by the computation  of ${\mathbf C^1}^{-1}$ (computation of ${\mathbf C^0}^{-1}$ is straight forward since $\mathbf C^0$ is diagonal due to spatial and temporal independence assumption under $\mathcal H_0$). Computation of ${\mathbf C^1}^{-1}$ is also straight-forward when the elements of $\mathbf x$ are uncorrelated (as considered in Case I in Example 1) since then $\mathbf C^1$ becomes diagonal. With spatially correlated uncompressed data, computation of the inverse of a $ML\times ML$ matrix is required. For $L=2$, $\mathbf C^1$ can be partitioned into $4$ blocks of each of size $M\times M$, and  the matrix inversion Lemma in block form can be exploited. This way, it is necessary to compute the inverse of a $M\times M$ matrix. For $L>2$, the block inversion Lemma can be still used with nested partitions to compute  ${\mathbf C^1}^{-1}$.  With `u:product', the likelihood ratio is computed using the given marginal pdfs. For the copula based approaches, computation of the parameters corresponding to a given copula function is required in addition to the computation of the joint marginal pdfs. The parameters that need to be computed for different copula functions considered  above are summarized  in Table I in \cite{He_tsp2015}.

For illustration, we provide in the following,  the average run time (in seconds)  required to compute the decision statistic for  Examples 1  and 2 considered above  with different approaches. For Example 1,  Cases II and III are considered with $N=100,1000$ in Table. \ref{table_example}. For Example 2, run times with $N=100$  and $N=1000$ are shown in Table \ref{table_example2}.  The run time is computed with MATLAB in a  Intel(R) Core(TM) i7-3770 CPU$@$ 3.40GHzz processor with 12 GB RAM. To estimate the parameters for each copula function, we use the 'copulafit' function and copula density was computed using the  function  'copulapdf' in Matlab.
\begin{table}[!h]
\renewcommand{\arraystretch}{1.3}
\caption{Average run time (in seconds) required to compute decision statistics in Example 1}
\label{table_example}
\centering
\begin{tabular}{|l|l|l|}
\hline
Approach & $N=100$ & $N=1000$\\
$~$ & Case II ~ Case III & Case II ~ Case III\\
$~$ & ($L=2$) ~ ($L=3$) & ($L=2$) ~ ($L=3$)\\
\hline
`u:product'  &  0.0080 ~~~~~~~~~  0.0281  &  0.0107~~~~~~~ 0.0322\\
\hline
`u:copula-Gaussian'   & 0.0105 ~~~~~~~~~  0.0314  & 0.0138~~~~~~~  0.0359\\
\hline
`u:copula-t'   &  0.0664 ~~~~~~~~~  0.0948 &0.2730 ~~~~~~ 0.3634\\
\hline
`c:GA',   $c_r=0.1$   &1.2239e-04 ~ 1.2334e-04 & 7.3375e-04 ~ 0.0016\\
\hline
`c:GA', $c_r=0.2$   &    1.4958e-04 ~ 1.5876e-04  & 0.0016 ~~~~~~ 0.0029\\
\hline
`c:GA', $c_r=0.5$   &2.4795e-04 ~ 2.5894e-04&  0.0091  ~~~~~~ 0.0097\\
\hline
`c:GA', $c_r=0.9$   & 3.0501e-04 ~ 3.5743e-04  & 0.0293 ~~~~~~  0.0294\\
\hline
\end{tabular}
\end{table}

\begin{table}[!h]
\renewcommand{\arraystretch}{1.3}
\caption{Average run time (in seconds) required to compute decision statistics in Example 2}
\label{table_example2}
\centering
\begin{tabular}{|l|l|l|}
\hline
Approach & $N=100$ & $N=1000$ \\

\hline
`u:product'  &  0.1425  &   1.4520 \\
\hline
`c:GA', $c_r=0.1$   &  4.5356e-04 & 0.0092 \\
\hline
`c:GA', $c_r=0.2$   &  6.8284e-04  & 0.0436 \\
\hline
`c:GA',  $c_r=0.5$   &   0.0027 &   0.4786\\
\hline
\end{tabular}
\end{table}

It can be seen that, `c:GA' with even fairly   large $c_r$ ($< \approx 0.5$) consumes less time than  all the approaches considered with uncompressed data for  both examples. In Example 2 where the  marginal pdfs are not readily available to compute the decision statistic for `u:product', this gap in run times becomes more significant.

In terms of the communication  overhead, to perform all the suboptimal  approaches considered with uncompressed data, each node is required to transmit its length-$N$ observation vector to the fusion center. On the other hand, with `c:GA', each node is required to transmit only a length-$M$ ( $M < N$) vector. Thus, the communication overhead, in terms of the total number of messages to be transmitted by each node,  is reduced by a factor of $c_r=\frac{M}{N}$ with `c:GA' compared to all the approaches with uncompressed data.

In summary, from Examples 1 and 2, we can conclude the  following:
\begin{itemize}

         \item The computational complexity  of `c:GA' with small $c_r$ is significantly  less than that with the other approaches with uncompressed data.
               \item The communication overhead of `c:GA' is reduced by a factor $c_r$ compared to all the other approaches with uncompressed data.
                   \item `c:GA' performs significantly better than `u:product' and shows  similar/comparable  performance as  that with the  copula based approach when the covariance matrix of uncompressed data under $\mathcal H_1$  is   non-diagonal  given that the mean parameters of data under the two hypotheses are not significantly different from each other.
       \item With dependent but uncorrelated uncompressed data, `c:GA' can  still be advantageous in terms of the computational/communication  complexity at the expense  of  a small loss of performance compared to `u:product'  and  quite significant performance loss compared to the copula based approaches.

        \end{itemize}

\section{Detection   with  Compressed Dependent  Data Based on Second Order Statistics of Uncompressed Data}\label{sec_covariance}
In Section \ref{sec_likelihood}, the detection problem was solved in the compressed domain assuming that the marginal pdfs and the first and  second order statistics of the uncompressed data under each  hypothesis  are known  (or can be accurately estimated).  However,  these  assumptions may be too restrictive in practical settings. In the following, we consider a nonparametric approach where the goal is to compute a  decision statistic for detection  based on the  statistics of uncompressed data where such statistics are estimated from compressed measurements.  Consider that  each node (modality) has access to a stream of data $\mathbf x_j(t)$ for $t=1,\cdots,T$. Further let $\mathbf x(t)=[\mathbf x_1(t)^T, \cdots, \mathbf x_L(t)^T]^T$ and   redefine $\mathbf D_x$ to be   the covariance matrix of $\mathbf x(t)$.  We consider a  decision statistic of the form
\begin{eqnarray*}
\Lambda_{\mathrm{cov}} = f(\mathbf D_x).
\end{eqnarray*}
Under  $\mathcal H_0$, $\mathbf D_x$ is diagonal with the assumption that the  data is independent across time and space. Under $\mathcal H_1$, $\mathbf D_x$ can have off-diagonal elements in the presence of  spatially correlated multimodal data. Since the covariance matrix has different structures under the two hypotheses,  a  decision statistic based on uncompressed covariance matrix can be computed. There are  several covariance based decision statistics computed  in \cite{Zeng_C2007,Zeng_VT09}. Covariance absolute value (CAV) detection
 is  considered in \cite{Zeng_C2007,Zeng_VT09}. With CAV, the decision  statistic becomes
\begin{eqnarray}
\Lambda_{\mathrm{cov}} = \frac{\underset{i}{\sum}\underset{j}{\sum}|  \mathbf D_x[i,j]|}{\underset{i}{\sum}|\mathbf D_x[i,i]|}. \label{Lamda_1}
\end{eqnarray}
With this statistic, when there are off-diagonal elements in the covariance matrix, we have $\Lambda_{\mathrm{cov}} > 1$ while $\Lambda_{\mathrm{cov}}=1$ when the off diagonal elements are zeros.  The goal  is to get an approximation to $\Lambda_{\mathrm{cov}}$ based on compressed data $\mathbf y(t)=\mathbf A \mathbf x(t)$ for $t=1,\cdots,T$ where $\mathbf A$ is as defined in (\ref{eq_A}).

The  covariance matrix of $\mathbf y(t)$, $\mathbf C_y$, can be expressed as
\begin{eqnarray*}
\mathbf C_y = \mathbf A \mathbf D_x \mathbf A^T.
\end{eqnarray*}
Note that, $\underset{i}{\sum} \mathbf D_x[i,i] = \mathrm{tr}(\mathbf D_x)$. We have
 \begin{eqnarray*}
 \mathrm{tr}(\mathbf C_y) &=& \mathrm{tr}(\mathbf A \mathbf D_x \mathbf A^T)= \mathrm{tr}(\mathbf A^T \mathbf A \mathbf D_x ).
 \end{eqnarray*}
 When $\mathbf A_j$ is selected as an orthoprojector for $j=1,\cdots,L$, we may approximate $\mathbf A^T \mathbf A \approx \frac{M}{N} \mathbf I$. Then, we have
  \begin{eqnarray*}
 \mathrm{tr}(\mathbf C_y) \approx  \frac{M}{N}\mathrm{tr}(\mathbf D_x ),
 \end{eqnarray*}
and, thus, $\mathrm{tr}(\mathbf D_x ) = \frac{N}{M} \mathrm{tr}(\mathbf C_y)$.  Here we approximate $\mathbf C_y $ by the  sample covariance  matrix computed as
\begin{eqnarray}
\tilde{\mathbf C}_y = \frac{1}{T}\sum_{t=1}^T [\mathbf y(t)-\mathbb E(\mathbf y(t))][\mathbf y(t)-\mathbb E(\mathbf y(t))]^T.\label{cov_sample}
\end{eqnarray}  Then, the decision statistic (\ref{Lamda_1}) reduces to
\begin{eqnarray}
\Lambda_{\mathrm{cov}} = \frac{\eta + 2\sum_{i=1}^{NL-1} \sum_{j=i+1}^{NL}  |\mathbf D_x[i,j]|}{\eta}\label{Lamda_1_2}
\end{eqnarray}
where $\eta= \frac{N}{M } \mathrm{tr}(\tilde{\mathbf C}_y)$.

The goal is to estimate the off-diagonal elements of $\mathbf D_x$ based on $\tilde{\mathbf C}_y$.  It is noted that estimation of the complete covariance matrix, $\mathbf D_x$,  is not necessary to construct $\Lambda_{\mathrm{cov}}$ in \eqref{Lamda_1_2}.
 In the case where only spatial samples are dependent and the time samples of each modality are independent, the covariance matrix has only a $2(L-1)$ diagonals (in addition to the main diagonal) with nonzero elements.
 In the following, we describe a procedure to compute $\Lambda_{\mathrm{cov}}$ in (\ref{Lamda_1_2}) based on $\tilde{\mathbf C}_y$ when there is spatial dependence of data so that $\mathbf D_x$ has a known structure.
Note that we may write $\tilde{\mathbf C}_y$ as
\begin{eqnarray}
\tilde{\mathbf C}_y &=& \underset{i,j}{\sum} \mathbf D_x[i,j] \mathbf a_i \mathbf a_j^T.
\end{eqnarray}
Let $\mathcal U$ be  a set consisting of $(i,j)$ pairs corresponding to the desired off-diagonal elements in the upper (or lower) triangle of $\mathbf D_x$. The $m$-th pair in $\mathcal U$ is denoted by, $(\mathcal U(m,1), \mathcal U(m,2))$ and $\tilde N= |\mathcal U|$.
\begin{proposition}
Let $\mathbf d_{U}$ be the vector containing elements  $\mathbf D_x[i,j] $ for $(i,j)\in \mathcal U$. The  least squares solution of  $ \mathbf d_{U}$ is given by
\begin{eqnarray}
\hat {\mathbf d}_U = \mathbf B^{-1} \mathbf b\label{LS_solution_general}
\end{eqnarray}
 where $\mathbf B$ is a $\tilde N\times \tilde N$ matrix whose $(m,r)$-th element is given by
 \begin{eqnarray}
 \mathbf B[m,r] = \mathbf a_{\mathcal U(r,2)}^T \mathbf a_{\mathcal U(m,2)} \mathbf a_{\mathcal U(m,1)}^T \mathbf a_{\mathcal U(r,1)}\label{LS_B}
 \end{eqnarray}
 and $\mathbf b$ is a $\tilde N\times 1$ vector with
 \begin{eqnarray}
 \mathbf b = [\mathbf a_{\mathcal U(1,2)}^T\tilde{\mathbf C}_y^T \mathbf a_{\mathcal U(1,1)}, \cdots, \mathbf a_{\mathcal U(\tilde N,2)}^T\tilde{\mathbf C}_y^T \mathbf a_{\mathcal U(\tilde N,1)}]^T.\label{LS_b}
 \end{eqnarray}
 \end{proposition}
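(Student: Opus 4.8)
The plan is to recognize the estimation of $\mathbf d_U$ as an ordinary linear least squares problem and simply read off its normal equations. Starting from the identity $\tilde{\mathbf C}_y=\sum_{i,j}\mathbf D_x[i,j]\,\mathbf a_i\mathbf a_j^T$ stated just above the proposition, and invoking the assumed structure of $\mathbf D_x$ (only the main diagonal and the $2(L-1)$ off-diagonals are nonzero), the matrix $\tilde{\mathbf C}_y$ is, up to the unmodelled/sample-fluctuation part, an affine function of the $\tilde N=|\mathcal U|$ unknowns collected in $\mathbf d_U$, with the rank-one ``atoms'' $\mathbf G_m:=\mathbf a_{\mathcal U(m,1)}\mathbf a_{\mathcal U(m,2)}^T$ playing the role of regressors and the diagonal part handled separately (it enters the target statistic $\Lambda_{\mathrm{cov}}$ only through $\mathrm{tr}(\mathbf D_x)=(N/M)\,\mathrm{tr}(\tilde{\mathbf C}_y)$). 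The least squares estimate is then the minimizer of $J(\mathbf d_U)=\big\|\tilde{\mathbf C}_y-\sum_{m=1}^{\tilde N}(\mathbf d_U)_m\mathbf G_m\big\|_F^2$, any known contribution having been subtracted from $\tilde{\mathbf C}_y$ beforehand.

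First I would write the stationarity conditions $\partial J/\partial(\mathbf d_U)_r=0$ in the Frobenius inner product $\langle\mathbf X,\mathbf Y\rangle=\mathrm{tr}(\mathbf X^T\mathbf Y)$, which gives the normal equations $\sum_{m=1}^{\tilde N}\langle\mathbf G_m,\mathbf G_r\rangle(\mathbf d_U)_m=\langle\tilde{\mathbf C}_y,\mathbf G_r\rangle$ for $r=1,\dots,\tilde N$, i.e.\ $\mathbf B\hat{\mathbf d}_U=\mathbf b$ with $\mathbf B[m,r]=\langle\mathbf G_m,\mathbf G_r\rangle$ and $\mathbf b[r]=\langle\tilde{\mathbf C}_y,\mathbf G_r\rangle$. (Equivalently, vectorizing with $\Phi=[\mathrm{vec}(\mathbf G_1),\dots,\mathrm{vec}(\mathbf G_{\tilde N})]$ gives $\mathbf B=\Phi^T\Phi$, $\mathbf b=\Phi^T\mathrm{vec}(\tilde{\mathbf C}_y)$; equivalently still, the $r$-th normal equation is $\mathbf a_{\mathcal U(r,1)}^T\big(\tilde{\mathbf C}_y-\sum_m(\mathbf d_U)_m\mathbf G_m\big)\mathbf a_{\mathcal U(r,2)}=0$.) Next I would evaluate the two inner products using the cyclic property of the trace: $\langle\mathbf G_m,\mathbf G_r\rangle=\mathrm{tr}\big(\mathbf a_{\mathcal U(m,2)}\mathbf a_{\mathcal U(m,1)}^T\mathbf a_{\mathcal U(r,1)}\mathbf a_{\mathcal U(r,2)}^T\big)=\big(\mathbf a_{\mathcal U(m,1)}^T\mathbf a_{\mathcal U(r,1)}\big)\big(\mathbf a_{\mathcal U(r,2)}^T\mathbf a_{\mathcal U(m,2)}\big)$, which is exactly \eqref{LS_B}, and $\langle\tilde{\mathbf C}_y,\mathbf G_r\rangle=\mathrm{tr}\big(\mathbf a_{\mathcal U(r,2)}\mathbf a_{\mathcal U(r,1)}^T\tilde{\mathbf C}_y\big)=\mathbf a_{\mathcal U(r,2)}^T\tilde{\mathbf C}_y^T\mathbf a_{\mathcal U(r,1)}$ (using that a scalar equals its transpose), which is the $r$-th entry of $\mathbf b$ in \eqref{LS_b}. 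Solving the normal equations then yields \eqref{LS_solution_general}.

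The one step that needs genuine care, rather than bookkeeping, is the nonsingularity of $\mathbf B$ implicit in $\hat{\mathbf d}_U=\mathbf B^{-1}\mathbf b$: since $\mathbf B$ is the Gram matrix of the regressors $\{\mathbf G_m\}$ under the Frobenius inner product, it is invertible precisely when the rank-one matrices $\mathbf a_{\mathcal U(m,1)}\mathbf a_{\mathcal U(m,2)}^T$ are linearly independent. I would argue this from the block-diagonal form of $\mathbf A$ in \eqref{eq_A} together with the fact that each block $\mathbf A_j$ is an orthoprojector --- so the relevant columns $\mathbf a_i$ are nonzero and $\mathbf A^T\mathbf A\approx(M/N)\mathbf I$ keeps the atoms well separated --- provided the number of unknowns $\tilde N=2(L-1)$ is small relative to $M$. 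A secondary point worth spelling out explicitly is exactly which affine model is being fitted: whether the diagonal entries and the symmetric images $\mathbf a_{\mathcal U(m,2)}\mathbf a_{\mathcal U(m,1)}^T$ are regarded as known and subtracted off or are folded into the regressor set, since this changes the precise contents of the data vector $\mathbf b$; in the normalization adopted in the proposition statement, though, the claim reduces exactly to the two trace identities above, so this modelling choice does not affect the algebra that produces \eqref{LS_B}--\eqref{LS_b}.
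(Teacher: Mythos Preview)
Your derivation is correct and lands on exactly the formulas in the proposition, but it takes a slightly different route from the paper. You fit the \emph{unsymmetrized} rank-one atoms $\mathbf G_m=\mathbf a_{\mathcal U(m,1)}\mathbf a_{\mathcal U(m,2)}^T$ and read off $\mathbf B[m,r]=\langle\mathbf G_m,\mathbf G_r\rangle$ and $\mathbf b[r]=\langle\tilde{\mathbf C}_y,\mathbf G_r\rangle$ directly. The paper instead formulates the residual with the \emph{symmetrized} regressors $\tilde{\mathbf A}_m=\mathbf G_m+\mathbf G_m^T$, which is the more natural model because both $\mathbf D_x$ and the sample covariance $\tilde{\mathbf C}_y$ are symmetric and each unknown $\mathbf D_x[i,j]=\mathbf D_x[j,i]$ contributes $\mathbf a_i\mathbf a_j^T+\mathbf a_j\mathbf a_i^T$ to $\tilde{\mathbf C}_y$. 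Its normal equations produce $\tilde{\mathbf B}[m,r]=\mathrm{tr}(\tilde{\mathbf A}_m\tilde{\mathbf A}_r^T)$ and $\tilde{\mathbf b}[m]=\mathrm{tr}(\tilde{\mathbf C}_y\tilde{\mathbf A}_m^T)$, after which the paper argues $\tilde{\mathbf B}=2\mathbf B$ and $\tilde{\mathbf b}=2\mathbf b$ so the factor of $2$ cancels. The identity $\tilde{\mathbf b}=2\mathbf b$ uses only the symmetry of $\tilde{\mathbf C}_y$; the identity $\tilde{\mathbf B}=2\mathbf B$ additionally relies on the block-diagonal structure of $\mathbf A$ (so that the cross term $\langle\mathbf G_m,\mathbf G_r^T\rangle=(\mathbf a_{\mathcal U(m,2)}^T\mathbf a_{\mathcal U(r,1)})(\mathbf a_{\mathcal U(r,2)}^T\mathbf a_{\mathcal U(m,1)})$ vanishes because the two factors involve columns from disjoint blocks).

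What each route buys: your unsymmetrized fit is algebraically shorter and delivers \eqref{LS_B}--\eqref{LS_b} in one stroke, at the cost of the modelling ambiguity you yourself flag in your final paragraph. The paper's symmetrized fit resolves that ambiguity up front---it is the ``right'' least-squares problem for a symmetric target---but then needs the extra reduction step to recover the stated $\mathbf B$ and $\mathbf b$. On the invertibility of $\mathbf B$, the paper does not argue it at all; your Gram-matrix observation and the remark that $\tilde N$ is small relative to $M$ already go further than the original on that point.
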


\begin{proof}
Let $\mathbf R = \tilde{\mathbf C}_y -  \underset{(i,j)\in \mathcal U}{\sum} \mathbf D_x[i,j] (\mathbf a_i \mathbf a_j^T + \mathbf a_j \mathbf a_i^T) = \tilde{\mathbf C}_y - \sum_{m=1}^{\tilde N} \mathbf d_{U}[m]\tilde{\mathbf A}_m$ where $\tilde{\mathbf A}_m = \mathbf a_{\mathcal U(m,1)} \mathbf a_{\mathcal U(m,2)}^T + \mathbf a_{\mathcal U(m,2)} \mathbf a_{\mathcal U(m,1)}^T$.
The LS solution of $\mathbf d_{U}$ is found by solving
 \begin{eqnarray}
\hat{\mathbf d}_{U} =\underset{\mathbf d_{U}}{\arg\min} ||\mathbf R||_F^2 = \underset{\mathbf d_{U}}{\arg\min} ~ \mathrm{tr}(\mathbf R \mathbf R^T).  \label{LS_problem}
 \end{eqnarray}
We can  express $\mathrm{tr}(\mathbf R \mathbf R^T)$ as,
\begin{eqnarray}
\mathrm{tr}(\mathbf R \mathbf R^T) = \mathrm{tr}(\tilde{\mathbf C}_y \tilde{\mathbf C}_y ^T) - 2 \tilde{\mathbf b}^T\mathbf d_U + \mathbf d_U^T \tilde{\mathbf B} \mathbf d_U\label{RRt}
\end{eqnarray}
where $\tilde{\mathbf b}[m] = \mathrm{tr}(\tilde{\mathbf C}_y  \tilde{\mathbf A}_m^T)$ for $m=1,\cdots, \tilde N$ and $\tilde{\mathbf B} [m,r] = \mathrm{tr}(\tilde{\mathbf A}_m\tilde{\mathbf A}_m^T)$ for $m,r=1,\cdots, \tilde N$. Taking the derivative of \eqref{RRt} with respect to $\mathbf d_{U}$, $\hat{\mathbf d}_{U}$ is found as
\begin{eqnarray}
\hat{\mathbf d}_{U} = \tilde{\mathbf B}^{-1} \tilde{\mathbf b}.
\end{eqnarray}
It can be easily shown that $\tilde{\mathbf B} = 2\mathbf B$ and $\tilde{\mathbf b} = 2{\mathbf b}$ where  ${\mathbf B}$ and $\mathbf b$ are as defined in \eqref{LS_B} and \eqref{LS_b}, respectively, resulting in \eqref{LS_solution_general} which completes the proof.
\end{proof}

Then,  $\Lambda_{\mathrm{cov}}$ in \eqref{Lamda_1_2} reduces to
\begin{eqnarray}
\Lambda_{\mathrm{cov}} = \frac{\eta + 2||\hat {\mathbf d}_U||_1}{\eta}.\label{Lamda_1_BiV}
\end{eqnarray}

\subsection{Illustrative Example}
 To illustrate the detection performance with the test statistic $\Lambda_{\mathrm{cov}}$ in \eqref{Lamda_1_BiV}, we consider Example  1  given in Section \ref{example} with Case II, in which the goal is to detect the change of the statistical parameters  due to a common random phenomenon.  Uncompressed data is generated based on  the considered marginal pdfs under the two hypotheses and the dependence model under $\mathcal H_1$ the as considered in  Case II in  Example  1.
Detection using \eqref{Lamda_1_BiV} is performed
assuming that the second order statistics of uncompressed data are not  known under any hypothesis  and estimating them with compressed measurements.  For this case, there are  only two nonzero off diagonals of $\mathbf D_x$ in which the values are the same (say $d_U$). To compute $\Lambda_{\mathrm{cov}}$, estimation of only  $d_U$ is required which is given by  $
\hat { d}_U = \frac{\mathbf b^T\mathbf 1}{\mathbf 1^T \mathbf B \mathbf 1}
$ where   $\mathbf B$ and $\mathbf b$ are as defined in \eqref{LS_B} and \eqref{LS_b}, respectively.

\begin{figure}[h!]
    \centering
    \begin{subfigure}[b]{0.4\textwidth}
        \includegraphics[width=\textwidth]{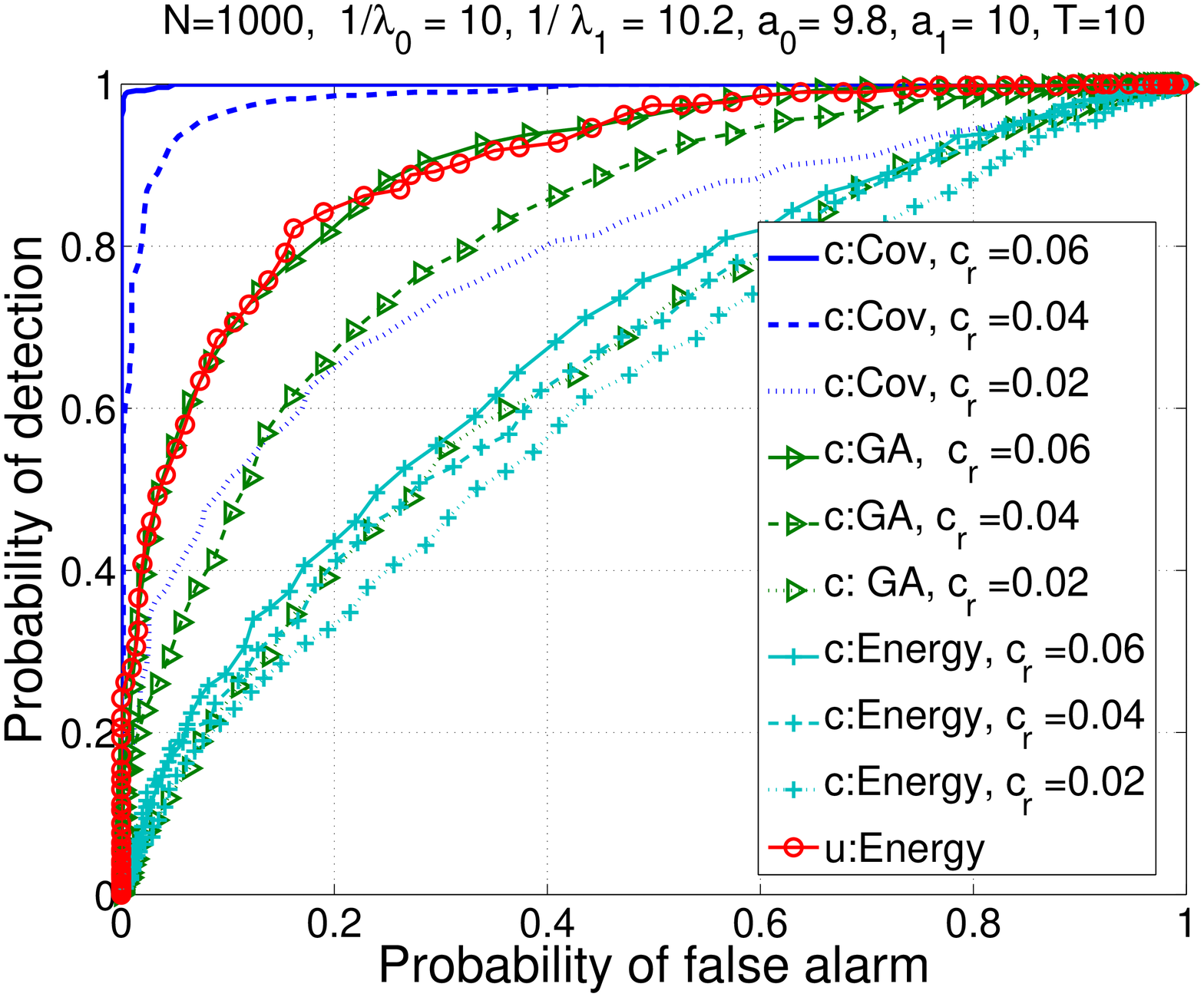}
        \caption{$c_r$ varies, $T=10$}
        \label{fig:N100}
    \end{subfigure}
    ~ 
    \begin{subfigure}[b]{0.4\textwidth}
        \includegraphics[width=\textwidth]{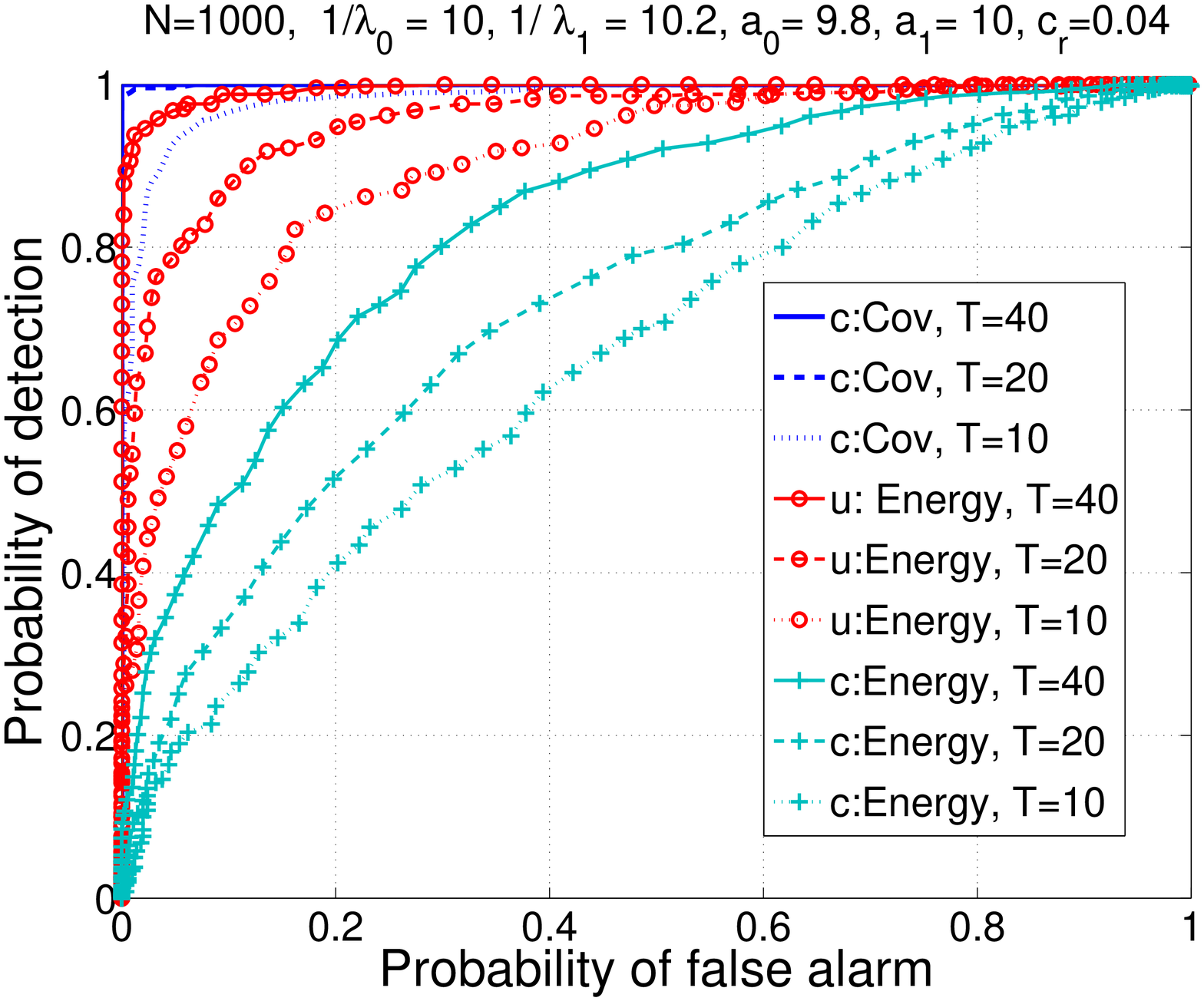}
        \caption{ $T$ varies, $c_r=0.04$}
        \label{fig:N1000}
    \end{subfigure}
       \caption{Detection performance with the test statistic \eqref{Lamda_1_BiV}; $N=1000$}\label{fig:cov_Gauss}
\end{figure}
In Fig. \ref{fig:cov_Gauss}, we plot ROC curves  with the test statistic  (\ref{Lamda_1_BiV}) for different values for $c_r$ and $T$ where $T$, as defined earlier,  is the number of sample vectors available for each modality. The detector with the test statistic  (\ref{Lamda_1_BiV}) is denoted as `c:Cov'. The parameters used to generate data under the two hypotheses are the same as  were used in Fig. \ref{fig:ROC_case_I_II} (b). ROC curves are generated using $1000$ Monte Carlo runs. In Fig. \ref{fig:cov_Gauss} (a), the  performance of `c:Cov' is shown for different values of $c_r$ keeping  $T=10$.  We  compare the results obtained  using the energy detector with compressed as well  as uncompressed data, denoted by `c:Energy', and `u:Energy', respectively, which is a widely used nonparametric detector. The test statistic of `u:Energy' and `c:Energy' is given by $\Lambda_{\mathrm{u:Energy}} = \sum_{t=1}^T ||\mathbf x(t)||_2^2$, and $\Lambda_{\mathrm{c:Energy}} = \sum_{t=1}^T ||\mathbf y(t)||_2^2$, respectively.  Further the detection performance with  Gaussian approximation, `c:GA', is also shown which assumes  that the statistics of uncompressed data are known. With the parameter values considered, for given $c_r$ and $T$, it is seen from Fig. \ref{fig:cov_Gauss} (a) that `c:Cov'  significantly outperforms `c:Energy' and  `c:Cov'. Compared to `u:Energy',  `c:Cov' outperforms `u:Energy'  after  $c_r$ exceeds a certain value (which is very small).  In Fig. \ref{fig:cov_Gauss} (b), the detection performance is shown as $T$ varies for $c_r=0.04$ so  that $M=40$. As can be seen in Fig. \ref{fig:cov_Gauss} (b),  detection performance  improves as $T$ increases for all the detectors. For `c:Cov', the estimate of $\hat { d}_U$ becomes  more  accurate as $T$ increases. However, the value of $T$ that is capable of providing almost perfect detection with `c:Cov'  is not very large compared to $M$.  As can be seen in Fig. \ref{fig:cov_Gauss} (b), almost perfect detection is achieved when $T=20$ for the parameter values considered which is less than $M$. Further,  `c:Cov' outperforms `u:Energy'  and  `c:Energy' for all the values of $T$ considered while  the performance gain achieved by `c:Cov' is significant compared to `c:Energy'.

Next, we illustrate the robustness of `c:Cov'. A CAV based test statistic as in \eqref{Lamda_1} has been used to detect a signal in the presence of Gaussian noise in \cite{Zeng_C2007,Zeng_VT09}   without any compression. It has been shown that the threshold required to keep the  probability of false alarm, $P_f$, under  a desired value, $\alpha_0$, is independent of the noise power making the CAV based detector more robust than the energy detector against the uncertainties of the noise power. With the CAV based test statistic computed in this paper  based on the compressed data as in \eqref{Lamda_1_BiV}, the computation of the threshold, $\tau_C$,  in closed-form  so that $P_f\leq \alpha_0$ is computationally intractable. In the above example, uncompressed  data under $\mathcal H_0$ is non-Gaussian and the marginal pdfs of data at the two sensors are parameterized by $\lambda_0$ and $a_0$, respectively. In Fig. \ref{fig:threshold},   we plot the threshold vs $\lambda_0$ and $a_0$ to ensure  $P_f \leq \alpha_0$ keeping $N$ are $T$ are fixed. With `u:Energy', $\Lambda_{\mathrm{u:Energy}} | \mathcal H_0$ can be approximated by a Gaussian random variable with mean $\mu_{u,E}$ and variance $\sigma_{u,E}^2$ as $NT$ is sufficiently large where $\mu_{u,E} = NT \left(\frac{2}{\lambda_0^2} + \frac{a_0}{a_0+2}\right)$ and $\sigma_{u,E}^2 = NT \left(\frac{20}{\lambda_0^2} + \frac{4a_0}{(a_0+4)(a_0+2)^2}\right)$. Then, the threshold, $\tau_{u,E}$,  so that $P_f\leq \alpha_0$ can be obtained as $\tau_{u,E} = NT \left(\frac{2}{\lambda_0^2} + \frac{a_0}{a_0+2}\right) + Q^{-1}(\alpha_0) \sqrt{NT\left(\frac{20}{\lambda_0^2} + \frac{4a_0}{(a_0+4)(a_0+2)^2}\right)}$ where $Q^{-1}(\cdot)$ denotes the inverse Gaussian $Q$ function. Similarly with `c:Energy', the threshold $\tau_{c,E}$ can be found as $\tau_{c,E} = MT \left(\frac{2}{\lambda_0^2} + \frac{a_0}{a_0+2}\right) + Q^{-1}(\alpha_0) \sqrt{MT\left(\frac{20}{\lambda_0^2} + \frac{4a_0}{(a_0+4)(a_0+2)^2}\right)}$.

In Fig. \ref{fig:threshold}, the threshold required to keep $P_f\leq \alpha_0$ with `c:Cov'` and `u:Energy' vs $a_0$ and $\lambda_0$ is  shown for given $N$ and $T$. It can be observed that the variation of $\tau_C$  with respect to $a_0$ and $\lambda_0$ is fairly  small (negligible). However,   $\tau_{u:E}$ varies significantly as  $a_0$ and $\lambda_0$ vary (similar observations are seen for $\tau_{c:E}$ while the figures are not included for brevity).  Thus, in addition to the performance gain achieved  over `u:Energy' (and `c:Energy'), `c:Cov' is more robust against the uncertainties of the signal parameters under $\mathcal H_0$ than the energy detector.

\begin{figure}[h!]
    \centering
    \begin{subfigure}[b]{0.4\textwidth}
        \includegraphics[width=\textwidth]{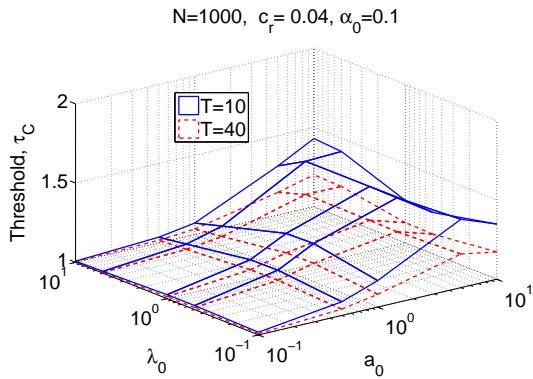}
        \caption{`c:Cov'}
        \label{fig:N100}
    \end{subfigure}
    \begin{subfigure}[b]{0.4\textwidth}
        \includegraphics[width=\textwidth]{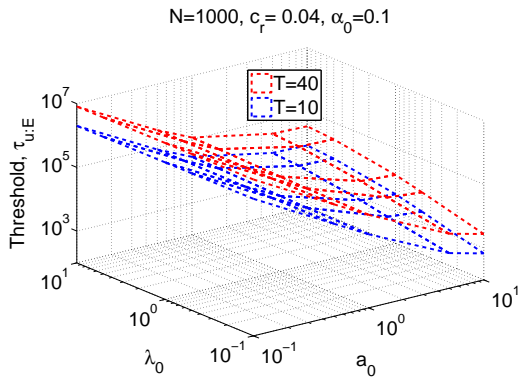}
        \caption{`u:Energy'}
        \label{fig:N1000}
    \end{subfigure}
       \caption{Threshold of `c:Cov' and `u:Energy' vs $a_0$ and $\lambda_0$}\label{fig:threshold}
\end{figure}

\subsection{Computational Complexity}
For $\Lambda_{\mathrm{cov}}$, the computation of $\mathbf B^{-1}$  and $\mathbf b$ as in \eqref{LS_B} and \eqref{LS_b}, respectively is required  in addition to computing the sample covariance matrix $\tilde {\mathbf C}_y$. Computation of  $\Lambda_{\mathrm{u:Energy}}$ and $\Lambda_{\mathrm{c:Energy}}$ is straight  forward from $\mathbf x(t)$ and $\mathbf y(t)=\mathbf A \mathbf x(t)$, respectively, for $t=1,\cdots,L$.  The run times  required to compute the decision statistics for the four  approaches considered in Fig. \ref{fig:cov_Gauss}(a) are  listed in Table \ref{table_example3} when the input is given as $\mathbf x(t)$ for $t=1,\cdots,L$. The statistic of `c:GA' is independent of $T$ since we assume perfect knowledge of the covariance matrix of uncompressed data for the Gaussian approximation based approach.  It is noted that, the decision statistic was computed over $10^3$ trials  to get the average run time. From Table \ref{table_example3}, it can be observed that a relatively  large run time is required for `c:Cov' compared to the other  approaches. This is the price to pay for the performance  gain achieved as depicted in Fig. \ref{fig:cov_Gauss} (a) and the robustness in threshold setting against the signal parameters under $\mathcal H_0$ as depicted in Fig. \ref{fig:threshold}. Further, in `c:Cov',  the run time does not significantly increase when  $T$ increases (going from  $T=10$ to $T=40$) although this increase in $T$ can improve the performance as can be seen in Fig. \ref{fig:cov_Gauss} (b).
\begin{table}[!h]
\renewcommand{\arraystretch}{1.3}
\caption{Average run time (in seconds) required to compute decision statistics for `c:Cov', `c:GA', `c:Energy' and `u:Energy' for Case II in  Example 2}
\label{table_example3}
\centering
\begin{tabular}{|l|l|l|}
\hline
Approach & $N=1000$ &  $N=1000$ \\
$~$& $T=10$ & $T=40$\\
\hline
`u:Energy'    &   6.1529e-04 &  0.0027 \\
\hline
`c:Energy' $c_r=0.02$   &  3.4280e-04  & 7.7378e-04 \\
\hline
`c:Energy' $c_r=0.04$   &    4.5460e-04  &  0.0011 \\
\hline
`c:Energy' $c_r=0.06$   &    5.3973e-04 & 0.0014\\
\hline
\hline
`c:GA'  $c_r=0.02$    &   4.0746e-04 &   4.0746e-04 \\
\hline
`c:GA'  $c_r=0.04$    &   4.9580e-04 &   4.9580e-04 \\
\hline
`c:GA' $c_r=0.06$     & 5.5683e-04 & 5.5683e-04\\
\hline
\hline
`c:Cov' $c_r=0.02$    &  0.0137 &    0.0148  \\
  \hline
`c:Cov' $c_r=0.04$    &  0.0163 &   0.0178 \\
\hline
`c:Cov'  $c_r=0.06$     &  0.0258  &   0.0277 \\
\hline
\end{tabular}
\end{table}

\section{Experimental  Results with Real  Data}\label{sec_simulation}
To further validate  the detection performance with multimodal data  in the compressed domain with  the proposed approaches, in this section,  we consider real experimental data. We use the footstep data, made available by the US Army
Research Laboratory (ARL), collected at the US southwest
border. The dataset consists of raw observations from several acoustic, seismic and PIR
sensors  that were deployed in an
outdoor space to record human and animal activity that is
typical in perimeter and border surveillance scenarios. The
participants in the data collection exercise walked/ran along
a predetermined path with sensors laid out along either side
of the path.

In the following experiments, we  consider two cases; detection of one man walking and a man leading a  horse  based on data at two sensors (one acoustic and one seismic). Each seismic/acoustic time series contains a
leading $60s$ of background data. For the detection problem, we use this as  $\mathcal H_0$ data.
The data are sampled at $10kHz$, and are mean centered and
oscillatory in nature. The time series   data at each sensor was split into non-overlapping frames of size $N$.  Further, $N_{tr}$ frames were used for training under each hypothesis  and $N_{mont} $ frames were used for test.

\begin{figure}[h!]
    \centering
    \begin{subfigure}[b]{0.38\textwidth}
        \includegraphics[width=\textwidth]{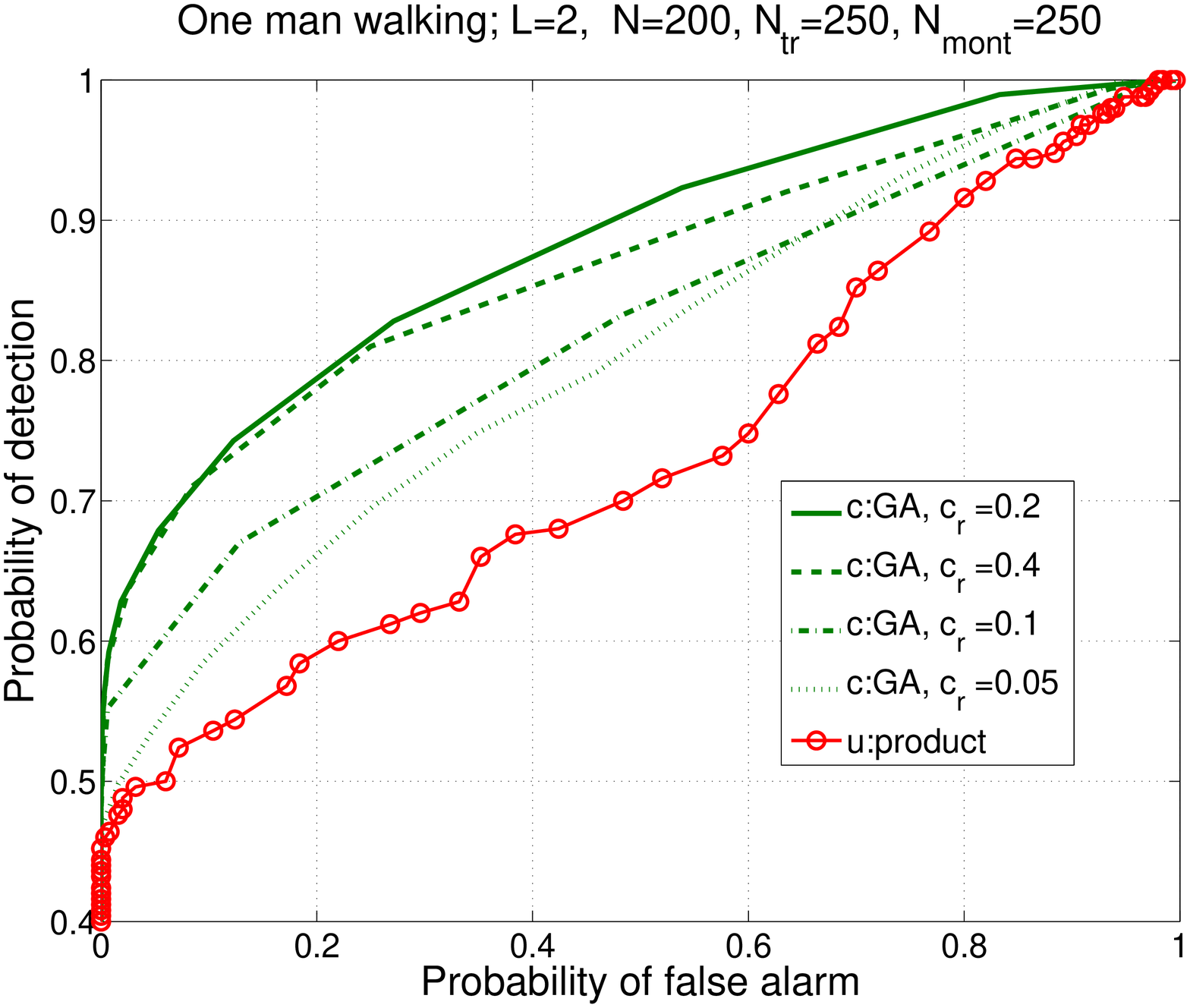}
        \caption{one man walking, $N=200$}
        \label{fig:N100}
    \end{subfigure}
    ~ 
    \begin{subfigure}[b]{0.38\textwidth}
        \includegraphics[width=\textwidth]{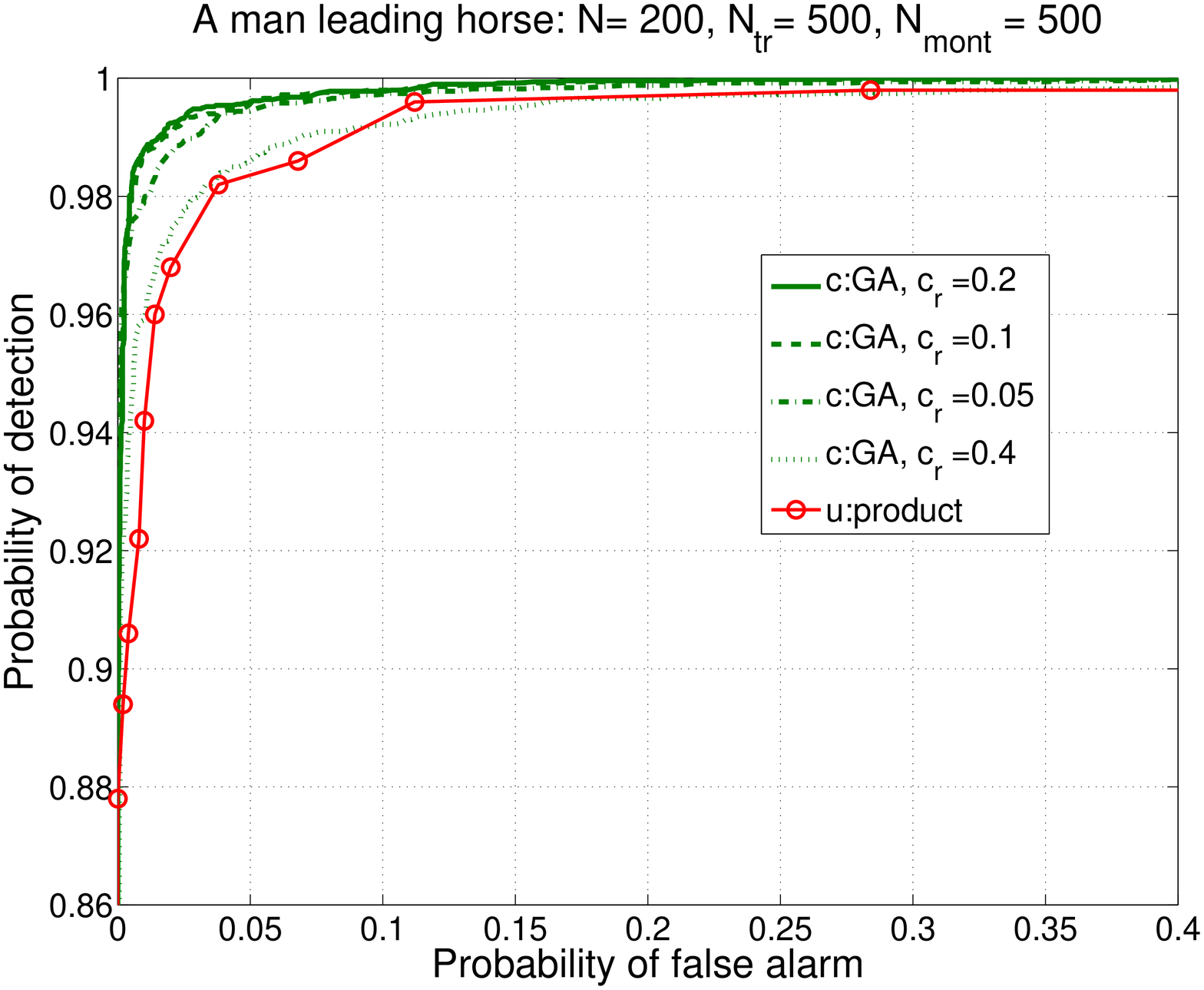}
        \caption{man leading horse; $N=200$}
        \label{fig:N1000}
    \end{subfigure}
    \caption{Detection performance with compressed and uncompressed data; $L=2$ (one seismic and one acoustic  sensor)}\label{fig:seismic_acoustic}
\end{figure}

In Fig. \ref{fig:seismic_acoustic}, we show the performance when detection is performed with `c:GA' (it is noted that we show the detection performance only with  the Gaussian approximation due to the limited number of samples to implement the covariance based approach). The mean and the covariance matrices of compressed data  under each hypotheses are  estimated using the training data. The values  used for $N$, $N_{tr}$ and $N_{mont}$ for the two cases are  shown in figure titles.  We further plot the detection performance with `u:product'.  To obtain the marginal pdfs of  uncompressed data under $\mathcal H_1$,  a kernel based density estimate  is computed  using the training data  with the Gaussian kernel. Under $\mathcal H_0$, the data is assumed to be iid Gaussian where the    mean and the  variance  were estimated using the training data.

  For both cases,  it is observed that `c:GA' with a small compression ratio, (e.g., $c_r = 0.05$),  outperforms  detection with `u:product'. Another  observation  is that, when $c_r$ increases beyond a certain threshold, the performance does not monotonically improve (e.g., performance with $c_r=0.2$ is better than that with $c_r=0.4$ in Fig. \ref{fig:seismic_acoustic}). This is because, when $c_r$  (thus $M$) increases, more training samples are needed  to estimate $\mathbf C^0$ and $\mathbf C^1$ accurately as  required in (\ref{stat_Gaussian}).  When the amount  of training data available is limited,    the estimates of   $\mathbf C^0$ and $\mathbf C^1$ become  less accurate as $M$ increases  leading to degraded detection performance. However, with the  available (limited) number of samples, detection with `c:GA' provides better performance  with small $c_r$ values than  detection using the product approach  with uncompressed data.

\section{Conclusion}\label{sec_conclusion}
Optimal decision fusion with high dimensional multimodal dependent data is a challenging problem. In this paper, we explored  the potential of CS in capturing the dependence structures of spatially dependent data  to develop  efficient decision  statistics  for detection in the compressed domain. In addition to the inherent benefits of CS in terms of  low computational and communication  overhead compared to processing and  transmitting  high dimensional data, we showed that the performance of  CS based detection with dependent data using LR can be better  than or similar to  several suboptimal detection techniques with uncompressed data under certain conditions.
 We discussed conditions under which   modeling dependence   in the compressed domain using Gaussian approximation  is more efficient and effective than modeling dependence with uncompressed data which is computationally expensive  most of the time. We further discussed a nonparametric approach for detection where a decision statistic is computed based on the covariance matrix of uncompressed data and the statistic is estimated in the compressed domain. This approach  can provide better performance when the non-Gaussian uncompressed  data is highly correlated with an additional computational cost compared to that is required for the  Gaussian approximation based approach. Further, the proposed compressed covariance based detector is more robust than the widely considered nonparametric detector; the energy detector.

\bibliographystyle{IEEEtran}
\bibliography{IEEEabrv,bib1,ref_1}

\begin{thebibliography}{10}
\providecommand{\url}[1]{#1}
\csname url@samestyle\endcsname
\providecommand{\newblock}{\relax}
\providecommand{\bibinfo}[2]{#2}
\providecommand{\BIBentrySTDinterwordspacing}{\spaceskip=0pt\relax}
\providecommand{\BIBentryALTinterwordstretchfactor}{4}
\providecommand{\BIBentryALTinterwordspacing}{\spaceskip=\fontdimen2\font plus
\BIBentryALTinterwordstretchfactor\fontdimen3\font minus
  \fontdimen4\font\relax}
\providecommand{\BIBforeignlanguage}[2]{{%
\expandafter\ifx\csname l@#1\endcsname\relax
\typeout{** WARNING: IEEEtran.bst: No hyphenation pattern has been}%
\typeout{** loaded for the language `#1'. Using the pattern for}%
\typeout{** the default language instead.}%
\else
\language=\csname l@#1\endcsname
\fi
#2}}
\providecommand{\BIBdecl}{\relax}
\BIBdecl

\bibitem{Wimalajeewa_icassp17}
T.~Wimalajeewa and P.~K. Varshney, ``Detection with multimodal dependent data
  using low dimensional random projections,'' in \emph{42nd Int. Conf.
  Acoustics, Speech, and Signal Processing (ICASSP)}, New Orleans, LA, Mar.
  2017.

\bibitem{Lahat_Proc2015}
D.~Lahat, T.~Adali, and C.~Jutten, ``Multimodal data fusion: An overview of
  methods, challenges, and prospects,'' \emph{Proc. IEEE}, vol. 103, no.~9, pp.
  1449--1477, 2015.

\bibitem{Nandakumar_PAMI08}
K.~Nandakumar, Y.~Chen, S.~C. Dass, and A.~K. Jain, ``Likelihood ratio based
  biometric score fusion,'' \emph{{IEEE} Trans. Pattern Anal. Mach. Intell.},
  vol.~30, pp. 342--347, Feb. 2008.

\bibitem{Adall_MM13}
E.~T.~Adall, ``Special section on multimodal biomedical imaging: Algorithms and
  applications,'' \emph{{IEEE} Trans. Multimedia}, vol.~15, no.~5, Aug. 2013.

\bibitem{Atrey_MS10}
P.~K. Atrey, M.~A. Hossain, A.~E. Saddik, and M.~S. Kankanhalli, ``Multimodal
  fusion for multimedia analysis: a survey,'' \emph{Multimedia Systems},
  vol.~16, no.~6, pp. 345--379, Apr. 2010.

\bibitem{Zhang_AES12}
H.~Zhang, N.~M. Nasrabadi, Y.~Zhang, and T.~S. Huang, ``Multi-view automatic
  target recognition using joint sparse representation,'' \emph{{IEEE} Trans.
  Aerosp. Electron. Syst.}, vol.~48, pp. 2481--2497, 2012.

\bibitem{Jin_Fusion11}
X.~Jin, S.~G.~A. Ray, and T.~Damarla, ``Multimodal sensor fusion for personnel
  detection,'' in \emph{Proceedings of 14th International Conference on
  Information Fusion}, Chicago, IL, July 2011.

\bibitem{Mercier_2007}
G.~Mercier, G.~Moser, and S.~Serpico, ``Conditional copula for change detection
  on heterogeneous \textsc{SAR} data,'' in \emph{Proc. IEEE Int. Geosci. Remote
  Sens. Symp. (IGARSS)}, July 2007, pp. 2394--2397.

\bibitem{iyengar_tsp11}
S.~G. Iyengar, P.~K. Varshney, and T.~Damarla, ``A parametric copula-based
  framework for hypothesis testing using heterogeneous data,'' \emph{IEEE
  Trans. Signal Process.}, vol.~59, no.~5, pp. 2308--2318, May 2011.

\bibitem{ashok_tsp11}
A.~Sundaresan and P.~K. Varshney, ``Location estimation of a random signal
  source based on correlated sensor observations,'' \emph{IEEE Trans. Signal
  Process.}, vol.~59, no.~2, pp. 787--799, Feb. 2011.

\bibitem{ashok_taes11}
A.~Sundaresan, P.~K. Varshney, and N.~S.~V. Rao, ``Copula-based fusion of
  correlated decisions,'' \emph{IEEE Trans. Aerosp. Electron. Syst.}, vol.~47,
  no.~1, p. 454=471, Jan. 2011.

\bibitem{Iyengar2011}
S.~G. Iyengar, P.~K. Varshney, and T.~Damarla, ``Biometric authentication: A
  copula-based approach,'' in \emph{Multibiometrics for Human Identification},
  B.~Bhanu and V.~Govindaraju, Eds.\hskip 1em plus 0.5em minus 0.4em\relax
  Cambridge University Press, 2011, pp. 95--119.

\bibitem{Subramanian_2011}
A.~Subramanian, A.~Sundaresan, and P.~K. Varshney, ``Fusion for the detection
  of dependent signals using multivariate copulas,'' in \emph{14th
  International Conference on Information Fusion}, Chicago, Illinois, USA,
  2011, pp. 740--747.

\bibitem{He_tsp2015}
H.~He and P.~K. Varshney, ``Fusing censored dependent data for distributed
  detection,'' \emph{IEEE Trans. Signal Process.}, vol.~63, no.~16, pp.
  4385--4395, Aug. 2015.

\bibitem{Nelsen2006}
R.~Nelsen, \emph{{A}n {I}ntroduction to {C}opulas}, 2nd~ed.\hskip 1em plus
  0.5em minus 0.4em\relax New York: Springer, 2006.

\bibitem{Mari_B1}
D.~Mari and S.~Kotz, \emph{Correlation and Dependence}.\hskip 1em plus 0.5em
  minus 0.4em\relax Imperial College Press, 2001.

\bibitem{candes1}
E.~Cand$\grave{e}$s, J.~Romberg, and T.~Tao, ``Robust uncertainty principles:
  exact signal reconstruction from highly incomplete frequency information,''
  \emph{IEEE Trans. Inf. Theory}, vol.~52, no.~2, pp. 489 -- 509, Feb. 2006.

\bibitem{candes2}
E.~Cand$\grave{e}$s and T.~Tao, ``Near-optimal signal recovery from random
  projections: Universal encoding strategies?'' \emph{IEEE Trans. Inf. Theory},
  vol.~52, no.~12, pp. 5406 -- 5425, Dec. 2006.

\bibitem{donoho1}
D.~Donoho, ``Compressed sensing,'' \emph{IEEE Trans. Inf. Theory}, vol.~52,
  no.~4, pp. 1289--1306, Apr. 2006.

\bibitem{Eldar_B1}
Y.~C. Eldar and G.~Kutyniok, \emph{Compressed Sensing: Theory and
  Applications}.\hskip 1em plus 0.5em minus 0.4em\relax Cambridge University
  Press, 2012.

\bibitem{duarte_ICASSP06}
M.~F. Duarte, M.~A. Davenport, M.~B. Wakin, and R.~G. Baraniuk, ``Sparse signal
  detection from incoherent projections,'' in \emph{Proc. Acoust., Speech,
  Signal Processing (ICASSP)}, May 2006.

\bibitem{haupt_ICASSP07}
J.~Haupt and R.~Nowak, ``Compressive sampling for signal detection,'' in
  \emph{Proc. Acoust., Speech, Signal Processing (ICASSP)}, vol.~3, Honolulu,
  Hawaii, Apr. 2007, pp. III--1509 -- III--1512.

\bibitem{davenport_JSTSP10}
M.~A. Davenport, P.~T. Boufounos, M.~B. Wakin, and R.~Baraniuk, ``Signal
  processing with compressive measurements,'' \emph{IEEE J. Sel. Topics Signal
  Process.}, vol.~4, no.~2, pp. 445 -- 460, Apr. 2010.

\bibitem{Wimalajeewa_asilomar10}
T.~Wimalajeewa, H.~Chen, and P.~K. Varshney, ``Performance analysis of
  stochastic signal detection with compressive measurements,'' in
  \emph{$44^{\mathrm{th}}$ Annual Asilomar Conf. on Signals, Systems and
  Computers}, Nov. 2010, pp. 913--817.

\bibitem{Gang_globalsip14}
G.~Li, H.~Zhang, T.~Wimalajeewa, and P.~K. Varshney, ``On the detection of
  sparse signals with sensor networks based on \textsc{S}ubspace
  \textsc{P}ursuit,'' in \emph{IEEE Global Conference on Signal and Information
  Processing (GlobalSIP)}, Atlanta, GA, Dec. 2014, pp. 438--442.

\bibitem{Bhavya_cscps14}
B.~Kailkhura, T.~Wimalajeewa, L.~Shen, and P.~K. Varshney, ``Distributed
  compressive detection with perfect secrecy,'' in \emph{2nd Int. Workshop on
  Compressive Sensing in Cyber-Physical Systems (CSCPS'14)}, Oct. 2014.

\bibitem{Bhavya_asilomar14}
B.~Kailkhura, T.~Wimalajeewa, and P.~K. Varshney, ``On physical layer secrecy
  of collaborative compressive detection,'' in \emph{$48^{\mathrm{th}}$ Annual
  Asilomar Conf. on Signals, Systems and Computers}, 2014.

\bibitem{Rao_icassp2012}
B.~S. M.~R. Rao, S.~Chatterjee, and B.~Ottersten, ``Detection of sparse random
  signals using compressive measurements,'' in \emph{Proc. Acoust., Speech,
  Signal Processing (ICASSP)}, 2012, pp. 3257--3260.

\bibitem{Cao_Info2014}
J.~Cao and Z.~Lin, ``Bayesian signal detection with compressed measurements,''
  \emph{Information Sciences}, pp. 241--253, 2014.

\bibitem{Kailkhura_WCL16}
B.~Kailkhura, S.~Liu, T.~Wimalajeewa, and P.~K. Varshney, ``Measurement matrix
  design for compressed detection with secrecy guarantees,'' \emph{IEEE
  Wireless Commun. Lett.}, vol.~5, no.~4, pp. 420--423, Aug. 2016.

\bibitem{Kailkhura_TSP16}
B.~Kailkhura, T.~Wimalajeewa, and P.~K. Varshney, ``Collaborative compressive
  detection with physical layer secrecy constraints,'' \emph{IEEE Trans. Signal
  Process.}, vol.~65, no.~4, pp. 1013--1025, Feb. 2017.

\bibitem{Wimalajeewa_tsipn16}
T.~Wimalajeewa and P.~K. Varshney, ``Sparse signal detection with compressive
  measurements via partial support set estimation,'' \emph{IEEE Trans. on
  Signal and Inf. Process. over Netw.}, vol.~3, no.~1, Mar. 2017.

\bibitem{Romero_SPM16}
D.~Romero, D.~Ariananda, Z.~Tian, and G.~Leus, ``Compressive covariance
  sensing: Structure-based compressive sensing beyond sparsity,'' \emph{{IEEE}
  Signal Process. Mag.}, vol.~33, no.~1, pp. 78--93, Jan. 2016.

\bibitem{Zeng_C2007}
Y.~Zeng and Y.-C. Liang, ``Covariance based signal detections for cognitive
  radio,'' in \emph{IEEE Int. Symposium on New Frontiers in Dynamic Spectrum
  Access Networks}, Dublin, Ireland, Apr. 2007, pp. 202--207.

\bibitem{Zeng_VT09}
------, ``Spectrum-sensing algorithms for cognitive radio based on statistical
  covariances,'' \emph{{IEEE} Trans. Veh. Technol.}, vol.~58, no.~4, pp.
  1804--1815, May 2009.

\bibitem{poor1}
H.~V. Poor, \emph{An Introduction to Signal Detection and Estimation}.\hskip
  1em plus 0.5em minus 0.4em\relax New York: Springer-Verlag, 1994.

\bibitem{cramer_1946}
H.~Cramer, \emph{Mathematical Methods of Statistics}.\hskip 1em plus 0.5em
  minus 0.4em\relax Princeton, New Jersey: Princeton University Press, 1946.

\bibitem{thakshilaj5}
T.~Wimalajeewa, H.~Chen, and P.~K. Varshney, ``Performance limits of
  compressive sensing-based signal classification,'' \emph{IEEE Trans. on
  Signal Process.}, vol.~60, no.~6, pp. 2758--2770, June 2012.

\bibitem{Moustafa_12}
K.~Abou-Moustafa and F.~Ferrie, ``A note on metric properties for some
  divergence measures: The gaussian case,'' \emph{J. Mach. Learn. Res.},
  vol.~25, pp. 1--15, Nov. 2012.

\end{thebibliography}

\end{document}